\documentclass[journal,twoside,web]{ieeecolor}

\usepackage{generic}
\usepackage{cite}
\usepackage{amsmath,amssymb,amsfonts}
\usepackage{algorithmic}
\usepackage{graphicx}
\usepackage{textcomp}
\usepackage{multirow}

\usepackage{comment}
\usepackage{balance}
\usepackage{graphicx}
\usepackage{subfigure}
\usepackage{epsfig} 
\usepackage{cite}
\usepackage{lcsys}

\usepackage[colorlinks = true,
            linkcolor = blue,
            urlcolor  = blue,
            citecolor = blue,
            anchorcolor = blue]{hyperref}

\usepackage{etoolbox}
\makeatletter
\@ifundefined{color@begingroup}%
  {\let\color@begingroup\relax
   \let\color@endgroup\relax}{}%
\def\fix@ieeecolor@hbox#1{%
  \hbox{\color@begingroup#1\color@endgroup}}
\patchcmd\@makecaption{\hbox}{\fix@ieeecolor@hbox}{}{\FAILED}
\patchcmd\@makecaption{\hbox}{\fix@ieeecolor@hbox}{}{\FAILED}

\newtheorem{theorem}{Theorem}
\newtheorem{lemma}{Lemma}

\newtheorem{corollary}{Corollary}
\newtheorem{definition}{Definition}
\newtheorem{remark}{Remark}
\newtheorem{assumption}{Assumption}

\renewcommand{\baselinestretch}{.985}

\begin{document}

\def\BibTeX{{\rm B\kern-.05em{\sc i\kern-.025em b}\kern-.08em
    T\kern-.1667em\lower.7ex\hbox{E}\kern-.125emX}}
\markboth{\journalname, VOL. XX, NO. XX, XXXX 2017}
{Author \MakeLowercase{\textit{et al.}}: Preparation of Papers for IEEE Control Systems Letters (August 2022)}

\title{\huge{Fixed-Time Voltage Regulation in Distribution Networks with Impedance Awareness}}


\author{
Nilanjan~Roy~Chowdhury$^{\star}$
and Venkatesh~Sarangan$^{\star}$,
\thanks{$^{\star}$~N. Roy Chowdhury and V. Sarangan are with the Future of Energy Research Division, Tata Consultancy Services Research, Pune 411057, India; \texttt{\{nilanjanroy.c,venkatesh.sarangan\}@tcs.com}}
\vspace{-5mm}
}

\maketitle
\thispagestyle{empty}

\begin{abstract}
This letter introduces an optimization-based fixed time control algorithm for solving the voltage regulation problem of a radial and balanced power distribution network. The proposed algorithm requires no prior knowledge of the exact network impedance (resistance and reactance), yet guarantees voltage convergence to the predefined safe limit within a fixed time-window. We embrace results from \textit{Fixed-time stability} (FxTs) and \textit{Control Lyapunov function} (CLF) to analyze the stability and robustness of the underlying algorithm and then synthesize it leveraging the \textit{Quadratic Programming} (QP) approach. We first analytically provide sufficient conditions on the control gains and the design parameters, that can ensure voltage regulation in fixed-time. Thereafter we transform the existing voltage regulation problem into an equivalent QP-based optimization framework and translate the aforesaid conditions to find feasible solutions for voltage stability. We empirically verify the efficacy of our proposed algorithm on the IEEE-33 bus distribution network and compare its performance against other existing robust control methods.
\end{abstract}

\begin{IEEEkeywords}
Voltage regulation; Fixed-time Stability; Control Lyapunov Function; Quadratic programming.
\end{IEEEkeywords}

\vspace{-0.2 cm}

\section{Introduction}
Voltage regulation problem in power distribution networks has drawn a substantial research attention in recent times, where the central objective is to minimize the \textit{voltage recovery time}\footnote{\textit{Voltage recovery time} refers to the time duration that required to bring back the network voltage within a pre-defined safe limits (i.e., $\pm\,5\%$ around the nominal voltage $\mathbf{v}_{\text{nom}}$) following the occurrence of a sudden and large external disturbance.}~\cite{shi2022stability}. These problems are inherently non-convex, and their complexity can further be exacerbated if the underlying network impedance values are uncertain~\cite{yeh2024online,chowdhury2025provably}. Despite this challenge, over the years, numerous feedback-based voltage regulation algorithms have been surfaced as a solution to this problem, see~\cite{shi2022stability,farivar2015local,qu2019optimal,colot2024optimal,chowdhury2025new,feng2023bridging} and others. However, most of these solutions yield asymptotic (or rather exponential) \textit{voltage recovery time}. Therefore, this letter focuses on developing a sophisticated \textit{feedback-based} robust control method that can solve the voltage regulation problem of a distribution network considering uncertainty in the network impedance values and furnish an improved voltage recovery time.

\smallskip

\noindent \textit{Prior studies}:~Traditional feedback-based voltage regulation algorithms mostly rely on either \textit{local} or \textit{distributed} control architectures. The \textit{local} control frameworks consider their own voltage measurements to compute the reactive power set-points~\cite{farivar2015local}, while in the \textit{distributed} formulations, the controllable nodes share information with their neighbours and steer the network voltage towards the optimal solution~\cite{qu2019optimal}. However, these results provide asymptotic voltage recovery time. Furthermore,~\cite{li2014real} infers that \textit{local} controllers fail to maintain a feasible voltage profile under certain circumstances and may experience stability issues for large-scale networks. Recently, \textit{centralized} robust control methods have also been envisioned in~\cite{colot2024optimal,chowdhury2025new} to deal with voltage regulation problems in real time with impedance uncertainty. These works leverage safety-critical control frameworks to ensure voltage converges to the safe limits, and employ a convex QP framework at each iteration to compute the control set-points. Unfortunately, the analysis presented in these studies only guarantees \textit{exponentially stable} solutions, implying that the convergence remains asymptotic rather than strictly bounded in time.

Recently, data-driven and machine learning approaches have also emerged to address network uncertainty in voltage regulation problems. \textit{Data-driven techniques} typically linearize non-linear branch flow equations to estimate network impedance offline using smart-meter data~\cite{picallo2022adaptive}. However, they primarily prioritize asymptotic convergence, and their efficacy critically depends on the quality and quantity of available data. Similarly, \textit{learning-based algorithms} leverage historical network data and synthesize \textit{stable} voltage regulators without knowing exhaustive network models, where the stability verifications are performed employing Lyapunov function theory~\cite{shi2022stability,feng2023bridging}. However, these verifications hold over an infinite time horizon, and require ample amounts of historical training data to achieve satisfactory performances. 

To enhance the efficacy of voltage regulation algorithms and to find an ameliorated voltage recovery time, fast/~accelerated voltage regulation methods have also been advocated in~\cite{kekatos2015fast,zhou2019accelerated}. These methods model (controllable) reactive power updates of a distribution network using traditional proximal gradient algorithms, and propose new scaling factors to improve their convergence rates. Albeit, these modifications can only attain exponential convergence and an improved result beyond that remains open.

\smallskip

\noindent \textit{Summary of Contributions}:~This letter addresses the aforesaid open problem by introducing a feedback-based \textit{fixed-time} control algorithm to solve the real-time voltage regulation problem of a \textit{balanced} distribution network under impedance uncertainties. Moving beyond existing methods that only offer infinite-settling-time guarantees, our approach ensures a pre-calculable, uniformly bounded recovery time. To ensure model accuracy of the network, we eschew traditional lossless formulations in~\cite{farivar2015local,shi2022stability,yeh2024online,qu2019optimal,feng2023bridging} that omit network losses and instead employ the \textit{lossy linearization} method from \cite{schweitzer2019lossy}, which analytically approximates network losses and offers significant benefits. The proposed algorithm is built on a novel framework that integrates fixed-time stability (FxTS), control Lyapunov functions (CLF), and quadratic programming (QP). Within this structure, FxTS and CLF are leveraged to analytically determine the control gain and design parameter bounds necessary for stability and robustness, while the QP framework enhances computational tractability and provides feasible control set-points. In essence, the proposed algorithm accounts for impedance uncertainties, adheres to reactive power limits, optimizes transient performance, and guarantees closed-loop stability while improving settling time.

\smallskip

\noindent \textit{Notation:} We use standard notation: $\mathbf{1}_n$ is the all-ones column vector, and $\mathbf{I}_n$ is the identity matrix (subscripts omitted if unambiguous). $\nabla h(\cdot)$ is the gradient of $h(\cdot)$, and $\|\cdot\|$ refers its Euclidean norm. For matrix $\mathbf{A}$, $\mathbf{A}^{-1}$ and $\mathbf{A}^{-\top}$ denote its inverse and inverse transpose. $\mathrm{diam}(\mathcal{X})$ is the diameter of set $\mathcal{X}$, $P\succ 0$ indicates a symmetric, positive-definite matrix.

\section{Foundational Frameworks}
Here, in Section~\ref{distflow}, we first briefly review the linearized \textit{lossy distribution flow} model (see~\cite{schweitzer2019lossy} for details), and then adopting this model, in section~\ref{revise}, we formulate the voltage regulation problem with \textit{fixed} voltage recovery time.  

\subsection{Linear approximation of branch flow model} \label{distflow}
Power distribution networks are typically radial in nature, and thus represented by a tree graph $\mathcal{T}=\left( \mathcal{N},\,\mathcal{E}\right)$. Here $\mathcal{N}\,=\,\left\lbrace 0,1,\cdots,n\right\rbrace$ is the node set, in which the root node $\left\lbrace 0\right\rbrace $ refers the \textit{substation node}, and its voltage $\mathbf{v}_0$ is assumed to be known and fixed. Likewise, the feeder lines of the underlying network are represented by the edge set $\mathcal{E}\,=\,\left\lbrace \left(i,j\right)\,:\,i,\,j\in\mathcal{N}\right\rbrace $. The topology of $\mathcal{T}$ is captured by an \textit{edge-to-node} incidence matrix $\bar{\mathbf{A}}\,:=\,\begin{bmatrix}\mathrm{a} & \mathbf{A}\end{bmatrix}$, where removing the root-node column $\mathrm{a}\in\left\lbrace 0,\,1\right\rbrace^{n}$ yields the non-singular \textit{reduced} incidence matrix $\mathbf{A}\in\left\lbrace 0,\,\pm\,1\right\rbrace^{n\times\,n}$. Let $\mathbf{v}(t),\,\mathbf{p}(t),\,\mathbf{q}(t)\in\mathbb{R}^n$ are respectively denoted as the voltage magnitude, active and reactive power injection values of the \textit{non-substation} nodes. Then, following the lossy distribution flow model in~\cite{schweitzer2019lossy} and employing the methodology in~\cite{chowdhury2025new}, we linearly approximate $\mathbf{v}(t)$ as
\begin{multline} \label{ldf}
\left(\mathbf{v}(t)-\mathbf{v}_0\mathbf{1}_{n}\right)\,=\,\underbrace{\mathbf{A}^{-1}\left( \mathbf{I}\,+\,\Phi\Psi\right)^{-1} \mathcal{D}_{\mathbf{x}}\mathbf{A}^{-\top}}_{\mathfrak{B}}\,\mathbf{q}(t) \\
+\,\underbrace{\mathbf{A}^{-1}\left( \mathbf{I}\,+\,\Phi\Psi\right)^{-1}\mathcal{D}_{\mathbf{r}}\mathbf{A}^{-\top}}_{\mathfrak{D}}\,\mathbf{p}(t),
\end{multline}
where, 
\begin{subequations} \label{dan100}
\begin{align}
\Phi\,&=\,\mathcal{D}^2_{\mathbf{r}}\,+\,\mathcal{D}^2_{\mathbf{x}}\,+\,2\mathcal{D}_{\mathbf{r}}\mathbf{A}^{-\top}\mathcal{D}_\mathbf{r}\,+\,2\mathcal{D}_\mathbf{x}\mathbf{A}^{-\top}\mathcal{D}_\mathbf{x},\\
\Psi\,&=\,\left(\mathcal{D}^2_{\mathbf{r}}\,+\,\mathcal{D}^2_{\mathbf{x}}\right)^{-1}\left(\mathbf{I}-2\mathcal{D}_{\sigma}\right).
\end{align}
\end{subequations}
Here the diagonal matrices $\mathcal{D}_{\mathbf{r}},\mathcal{D}_{\mathbf{x}}$ and $\mathcal{D}_{\sigma}$ embed the network resistances $\mathbf{r}$, reactances $\mathbf{x}$ and the \textit{voltage sensitivity parameter} $\sigma$ on their main diagonals. The sensitivity parameter $\sigma_{ij}$ approximates losses of the $ij$th feeder line by representing its end point voltages in terms of their squared magnitude values~\cite{schweitzer2019lossy}. Subsequently, Remarks~\ref{threm} and \ref{ben} outline the physical significance of $\sigma$ and the advantages of the proposed model \eqref{ldf} over the standard \textit{lossless flow model}, respectively.

\begin{remark} \label{threm}
Given a distribution network, the parameter $\sigma_{ij}\in\,\left(0,\,1\right)$ may vary across different feeder lines. Notably, one can derive the standard lossless model used in \cite{farivar2015local, yeh2024online, shi2022stability, feng2023bridging} by setting $\sigma_{ij}\,=\,0.5$ for all feeder lines in~\eqref{ldf}. Thus,~\eqref{ldf} serves as a generalized linear approximation framework that encapsulates the lossless model as a special case. Consequently, all subsequent results can directly be applied to lossless networks without any modifications.
\end{remark}

\smallskip

\begin{remark} \label{ben}
The lossy flow model \eqref{ldf} offers several advantages over its lossless counterpart. Notably, the lossless model overestimates voltage magnitudes and underestimates power flows in radial networks with positive line impedances \cite{markovic2022parameterized}. Furthermore, its accuracy degrades outside a tightly bounded, lightly loaded operating region. Conversely, accounting for network losses consistently yields high accuracy in state estimation across varying loading levels.~\cite{schweitzer2019lossy}.
\end{remark}

\subsection{Problem setup} \label{revise}
This section formulates the voltage regulation task as a fixed-time optimal control problem stated below:
\begin{subequations} \label{rc100}
\begin{align}
\min_{u(t)} \quad & \int_{t_0}^{t_f}\,\biggl(\sum_{i\in\mathcal{N}}\varrho^i_{v}\,\phi_i(\mathbf{v}_i)\,+\,\sum_{i\in\mathcal{C}}\varrho^i_{u}\,\psi_i(u_i)\biggr)\,\mathrm{dt} \label{rc200}\\
\textrm{s.\,t.} \quad & \mathbf{v}(t)\,=\,\mathfrak{B}\,\mathbf{q}(t)\,+\,\underbrace{\mathfrak{D}\,\mathbf{p}(t)\,+\,\mathbf{v}_0\mathbf{1}_{n}}_{\mathbf{v}_{\mathrm{ext}}(t)}\label{rc300}\\
  & \dot{\mathbf{q}}(t)\,=\,u(t) \label{rc400}\\
  & u(t)\,\in\,\mathcal{X}_{u} \quad \forall\,t\,\in\,\left[ t_0,\,t_f\right] \label{rc500}\\
  & \mathbf{v}(t)\in \mathcal{X}_v \quad\,\forall\,t\in\,\left[ t_{\mathrm{FX}},\,t_f\right]\subseteq\,\left[ t_0,\,t_f\right] \label{rc600}\\
  & t_{\mathrm{FX}}~\text{is pre-defined.} \label{rc700}
\end{align}
\end{subequations}   
Here given a fixed time $t_{\mathrm{FX}}\,\in\,\left( t_0,\,t_f\right)$, one aims to jointly minimize the voltage violation cost $\phi_i(\cdot)$ and the control (reactive power) cost $\varphi_i(\cdot)$, while the coefficients $\varrho^i_{\phi},\,\varrho^i_{\varphi}$ are used to balance these costs. We presume that the rate of change of reactive power injection $\mathbf{q}(t)$ (referred as $u(t)$) in~\eqref{rc400} is the decision variable, and this injection rate is bounded by the compact set $\mathcal{X}_u\,:=\,\left[ -u_{\max},\,u_{\max}\right] $ given in~\eqref{rc500}. Based on~\eqref{ldf}, the linearized voltage-reactive power dynamics is formulated as a continuous time dynamical systems in \eqref{rc300} to capture fast-acting reactive power control loops. This continuous-time formulation contrasts with the conventional discrete-time models adopted in \cite{li2014real, yeh2024online}. In~\eqref{rc300}, the drift matrix $\mathfrak{B}\in\mathbb{R}^{n\times n}$ is constructed from the network resistance and reactance, while $\mathbf{v}_{\mathrm{ext}}(t)\in\mathbb{R}^n$ accounts for \textit{uncontrollable} active power injections of the \textit{non-substation nodes} and the \textit{substation node} voltage. Throughout this work, $\mathbf{v}_{\mathrm{ext}}(t)$ is assumed to be unknown in real time, which precludes the use of feedforward optimization techniques for computing the control set-points. We aim to solve \eqref{rc100} by finding an optimal $u(t)\in\mathcal{X}_u$, that reinstates network voltages into the safe limit $\mathcal{X}_v$\footnote{Here $\mathcal{X}_v\,=\,\left[ \mathbf{v}_{\min},\,\mathbf{v}_{\max}\right]\,=\,\left[ 0.95\mathbf{v}_{\mathrm{nom}},\,1.05\mathbf{v}_{\mathrm{nom}}\right]$, in which $\mathbf{v}_{\min}, \mathbf{v}_{\max}$ and $\mathbf{v}_{\mathrm{nom}}$, respectively refer as the minimum, maximum voltage tolerance limit, and the nominal network voltage.} at a predefined fixed time $t_{\mathrm{FX}}$ after the occurrence of an external disturbance.

\smallskip

\begin{remark}
Note that, in~\eqref{rc100}, we presume that the controllers are placed at each node of a given distribution network, i.e., all nodes are controllable. However, we contend that our aforesaid formulation and the subsequent findings can also be applied to the distribution networks, in which only a subset of nodes are controllable. To explicate this statement, we partition the node set $\bar{\mathcal{N}}\,:=\,\mathcal{N}\setminus\,\left\lbrace 0\right\rbrace $ of a distribution network into two subsets, i.e., controllable nodes, denoted by $\mathcal{C}$, and uncontrollable nodes, defined as $\mathcal{L}\,=\,\bar{\mathcal{N}}\setminus\mathcal{C}$. Then rearranging the controllable and uncontrollable nodes, we write~\eqref{ldf} as
\begin{align}
\mathbf{v}(t)\,&=\,\begin{bmatrix}
\mathfrak{B}_{\mathcal{C}} & \mathfrak{G}_1\\
\mathfrak{G}_2 & \mathfrak{B}_{\mathcal{L}}
\end{bmatrix}\,\begin{bmatrix}
\mathbf{q}_{\mathcal{C}}(t)\\ \mathbf{q}_{\mathcal{L}}(t)
\end{bmatrix}\,+\,\mathfrak{D}\,\mathbf{p}(t)\,+\,\mathbf{v}_0\mathbf{1}_{n},\nonumber \\
&=\,\underbrace{\begin{bmatrix}
\mathfrak{B}_{\mathcal{C}} \\ \mathfrak{G}_2\end{bmatrix}}_{\mathrm{F}}\,\mathbf{q}_{\mathcal{C}}(t)\,+\,\underbrace{\begin{bmatrix}
\mathfrak{G}_1 \\ \mathfrak{B}_{\mathcal{L}}
\end{bmatrix}\,\mathbf{q}_{\mathcal{L}}(t)\,+\,\mathfrak{D}\,\mathbf{p}(t)\,+\,\mathbf{v}_0\mathbf{1}_{n}}_{\mathbf{v}_{\mathrm{ext}}(t)}, \label{ds123}
\end{align}
where the matrices $\mathfrak{B}_{\mathcal{C}},\,\mathfrak{B}_{\mathcal{L}}, \mathfrak{G}_1$ and $\mathfrak{G}_2$ are with appropriate dimensions. It is observed that the structure of~\eqref{ds123} adheres the \textit{same} characteristics of the continuous time dynamical systems given in~\eqref{rc300}. Thus, the proposed fixed-time algorithm and its corresponding results can easily be invoked to~\eqref{ds123} without any exhaustive modifications.    
\end{remark}

\section{Characteristics and Behaviour of Fixed-time Voltage Regulators}
This section introduces a fixed time voltage regulation algorithm for solving~\eqref{rc100}. Specifically, to solve~\eqref{rc100}, one requires the \textit{exact} values of the $\mathfrak{B}$ matrix, which depends on network resistance and reactance. However, factors like system aging, degradation, and thermal loading often make this information unavailable or outdated. Our proposed algorithm circumvents this challenge by solving~\eqref{rc100} without knowing $\mathfrak{B}$ precisely. To analyse this algorithm, we first derive the voltage-reactive power dynamics from the system model~\eqref{ldf}, and then transform it to an equivalent stabilization problem. We define $e(t)\,:=\,\mathbf{v}(t)-\mathbf{v}_{\text{nom}}$ as the \textit{voltage violation vector} and represent~\eqref{rc300} as a continuous-time dynamical system: 
\begin{align}
\frac{\mathrm{d}}{\mathrm{dt}}e(t)\,=\,\hat{\mathfrak{B}}\,u(t)\,+\,\underbrace{\left( \mathfrak{B}-\hat{\mathfrak{B}}\right)\,u(t)}_{\chi(u(t))}\,+\,\bigl( w(t)\,+\,d(t)\bigr). \label{sf100}
\end{align}
Here $\chi(\cdot)$ is modelled as a vanishing perturbation, in which $\hat{\mathfrak{B}}$ denotes the estimate of $\mathfrak{B}$. Conversely, we refer $w(t),\,d(t)\in\,\mathbb{R}^n$ as the external disturbances, in which $w(t):=\,\frac{\mathrm{d}}{\mathrm{dt}}\,\mathbf{v}_{\mathrm{ext}}(t)$ is the~\textit{persistent disturbance}, and $d(t)$ is the \textit{momentary disturbance}. Hereinafter, we consider $d(t)$ in \eqref{sf100}, acts as an \textit{impulsive transient}, exhibiting sharp rise and decay times. We make the following assumptions for~\eqref{sf100}:

\smallskip

\begin{assumption}\label{bound1}
Under all operating conditions, the \textit{persistent disturbance} $w(t)$ is bounded, i.e., $||w(t)||\,\leq\,\gamma$, where $\gamma=\left( \mathbf{v}_{\max}-\mathbf{v}_{\min}\right)/2$. 
\end{assumption}

\smallskip

Assumption~\ref{bound1} implies that the active bus power injection values do not vary abruptly over time, which seems to be reasonable, as the load profile of distribution networks are sufficiently smooth. Moreover, the definition of $\gamma$ infers that whenever $\mathbf{v}(t)$ stays within $\mathcal{X}_v$, any $w(t)$ satisfying Assumption~\ref{bound1}, does not push it outside $\mathcal{X}_v$.

\smallskip

\begin{assumption} \label{mda}
There exist a (known) $d_{\max}\,>\,0$, such that for all $t\,\geq\,0$, it holds $||d(t)||\,\leq\,d_{\max}$.
\end{assumption}

\smallskip

The above assumption infers that the magnitude and duration of the \textit{momentary disturbance} $d(t)$ cannot grow infinitely and are constrained by the physical laws, architecture, and protection systems of the grid, see~\cite{zhang2024online}.

\smallskip

\begin{assumption}\label{bound2}
There exist an $\delta\,>\,0$ such that $\big|\big|\mathfrak{B}\,-\,\hat{\mathfrak{B}}\big|\big|\,\leq\,\delta$.
\end{assumption}

\smallskip

This assumption is plausible as most existing algorithms yield bounded impedance estimates; see~\cite{yeh2024online,picallo2022adaptive}.


\subsection{Stability Studies of Fixed-Time Regulation Paradigm}\label{sfxt}
This section investigates the stability properties of the fixed-time voltage regulation algorithm. This investigation relies on the notion of FxTS-CLF, which is defined below:

\smallskip

\begin{definition}[\textit{FxTS-CLF-$\mathcal{S}_G$}] \label{fxt}
Consider the following nonlinear control affine system
\begin{align}
\frac{\mathrm{d}}{\mathrm{dt}}z(t)\,=\,f(z)\,+\,g(z)\,u(t), \label{abc}
\end{align}
where $z\in\mathcal{Z}\subseteq\,\mathbb{R}^n$, $u\in\mathcal{U}\subseteq\,\mathbb{R}^m$ and $f:\mathcal{Z}\rightarrow\,\mathbb{R}^n$ (resp. $g:\mathcal{Z}\rightarrow\,\mathbb{R}^{n\times\,m}$) describes the drift (resp. the control) vector field. Let $\mathcal{Z}$ be a compact (closed and bounded) set containing the goal set $\mathcal{S}_G$ in its interior. Then a continuously differentiable function $h:\mathcal{Z}\rightarrow\,\mathbb{R}_{+}$ is said to be a FxTS-CLF with respect to the set $\mathcal{S}_G$, if it is positive definite and satisfies the following: 
\begin{subequations} 
\begin{align*}
h(z)\,&\leq\,0 \quad \forall z\in\,\mathcal{S}_G \\
h(z)\,&>\,0 \quad \forall z\in\,\mathcal{Z}\setminus\,\mathcal{S}_G \\
\inf_{u\in\mathcal{U}}\nabla\,h(z)^{\top}\left(f(z)+g(z)u\right)&\leq-\alpha_1\,h(z)^{\mu_1}-\alpha_2\,h(z)^{\mu_2}, 
\end{align*}
\end{subequations}  
for all $z\in\,\mathcal{Z}$. Here $\alpha_1,\,\alpha_2\,>\,0$, and $\mu_1=1+\frac{1}{\eta}$, $\mu_2=1-\frac{1}{\eta}$ with $\eta\,>\,1$. Moreover, the time of convergence $t_{c}$ adheres $t_{c}\,\leq\,t_{\mathrm{FX}}:=\frac{\eta\,\pi}{2\sqrt{\alpha_1\,\alpha_2}}$, where $t_{\mathrm{FX}}$ is given in~\eqref{rc700}. 
\end{definition}

\smallskip

\begin{remark} \label{rem5}
Since $\mathcal{Z}$ is compact and $h(z)$ is continuously differentiable and positive definite on $\mathcal{Z}$, it yields a uniformly bounded gradient $\sup\limits_{z\in\mathcal{Z}}\big|\big|\nabla\,h(z)\big|\big|\,\leq\,\beta_2$, and a local quadratic growth bound $\beta_3\big|\big|z\big|\big|^2\geq\,h(z)\geq\,\beta_1\big|\big|z\big|\big|^2$ for $\beta_1, \beta_2, \beta_3 > 0$ and for all $z\in\mathcal{Z}\setminus\,\mathcal{S}_G$.
\end{remark}



\smallskip

Following~\cite{artstein1983stabilization} and Definition~\ref{fxt}, we can remark that there exist an admissible, locally lipschitz, continuous-time state-feedback control update $u:\mathcal{Z}\rightarrow\,\mathcal{U}$ for~\eqref{abc}, that can steer the network voltages $\mathbf{v}(t)$ towards the goal set $\mathcal{S}_G$ in fixed time after the initial disturbance occurs. For this controller, we presume the following:

\smallskip

\begin{assumption}\label{bound3}
The control input $u(\cdot)$ in~\eqref{sf100}, satisfies a linear growth bound, i.e., $\big|\big|u(e(t))\big|\big|\leq\xi\big|\big|e(t)\big|\big|$, for $\xi\geq\,0$. 
\end{assumption}

\smallskip


Now, considering this background, we present our first theoretical result, that provides bounds on $\alpha_1,\,\alpha_2$ and $\eta$ in Definition~\ref{fxt}, such that a given $u(t)$ regulates the solution of~\eqref{sf100} towards a sub-level set $\mathcal{S}_G$ with $\mathrm{diam}\left( \mathcal{S}_G\right)<\,1$, at a \textit{fixed time}. 

\begin{theorem} \label{fx100}
Suppose~\eqref{sf100} adheres Assumption~\ref{bound1},~\ref{bound2},~\ref{bound3}, and let $h(e)$ that satisfies Definition~\ref{fxt}, be a FxTS-CLF for the nominal system $\frac{\mathrm{d}}{\mathrm{dt}}e(t)\,=\,\hat{\mathfrak{B}}\,u(t)$. If $\alpha_1,\,\alpha_2,\,\eta$ are selected as $\min\,\left\lbrace \alpha_1,\,\alpha_2 \right\rbrace\,>\,\frac{\xi\,\delta\,\beta_2}{\sqrt{\beta_1}}$ and $\eta\,\in\,\left( 1,\,2\right]$, then there exist an input $u(t)$ and a sub-level set $\mathcal{S}_G$ with $\mathrm{diam}\left( \mathcal{S}_G\right)<\,1$, such that, whenever $\mathbf{v}(t)\notin\mathcal{X}_v$, $u(t)$ converges $\mathbf{v}(t)$ towards the set $\mathcal{S}_G$ in a fixed time $t_{\mathrm{FX}}$. 
\end{theorem}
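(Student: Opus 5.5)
We take the nominal FxTS-CLF $h$ and a control $u(t)$ that realizes the infimum in Definition~\ref{fxt} for the nominal system $\dot e=\hat{\mathfrak{B}}u$. We then treat $\chi(u)$ and the disturbances as perturbations, and show that $h$ still satisfies a fixed-time inequality of the same form with reduced but positive gains. Concretely, we fix a locally Lipschitz selection $u(e)$, whose existence is guaranteed by Artstein's theorem as remarked after Definition~\ref{fxt}, such that $\nabla h(e)^{\top}\hat{\mathfrak{B}}u\le-\alpha_1h^{\mu_1}-\alpha_2h^{\mu_2}$. Along trajectories of \eqref{sf100} we then obtain
\begin{align*}
\dot h\,\le\,-\alpha_1h^{\mu_1}-\alpha_2h^{\mu_2}+\nabla h^{\top}\chi(u)+\nabla h^{\top}\bigl(w+d\bigr).
\end{align*}

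The key step bounds the mismatch term. Assumptions~\ref{bound2} and~\ref{bound3} together with Remark~\ref{rem5} give $|\nabla h^{\top}\chi(u)|\le\beta_2\delta\xi\|e\|$. The quadratic lower bound $h\ge\beta_1\|e\|^2$ then yields $\|e\|\le h^{1/2}/\sqrt{\beta_1}$, so the term is at most $\frac{\xi\delta\beta_2}{\sqrt{\beta_1}}h^{1/2}$. With $\eta\in(1,2]$ we have $\mu_2=1-1/\eta\in(0,1/2]$ and $\mu_1\ge 3/2$, so $1/2\in[\mu_2,\mu_1]$. By convexity of the map $s\mapsto h^{s}$, or equivalently by splitting into the cases $h\le1$ and $h\ge1$, we get $h^{1/2}\le h^{\mu_1}+h^{\mu_2}$. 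Writing $c:=\xi\delta\beta_2/\sqrt{\beta_1}$, this gives
\begin{align*}
\dot h\,\le\,-(\alpha_1-c)h^{\mu_1}-(\alpha_2-c)h^{\mu_2}+\beta_2\|w+d\|.
\end{align*}
The hypothesis $\min\{\alpha_1,\alpha_2\}>c$ makes both effective gains $\tilde\alpha_i:=\alpha_i-c$ strictly positive. In the disturbance-free case, the standard fixed-time lemma already gives convergence within $\eta\pi/(2\sqrt{\tilde\alpha_1\tilde\alpha_2})$.

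The disturbance term is handled by passing to a sub-level set rather than to the origin. We use Assumption~\ref{bound1} and Assumption~\ref{mda} to bound the constant offset $\beta_2\|w+d\|\le\beta_2(\gamma+d_{\max})$. We then absorb this offset with a fraction $\theta\in(0,1)$ of the decay terms: outside the set $\mathcal{S}_G:=\{h\le\bar h\}$, where $\bar h$ is chosen so that $\theta(\tilde\alpha_1\bar h^{\mu_1}+\tilde\alpha_2\bar h^{\mu_2})\ge\beta_2(\gamma+d_{\max})$, we have $\dot h\le-(1-\theta)(\tilde\alpha_1h^{\mu_1}+\tilde\alpha_2h^{\mu_2})$. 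This is again of fixed-time form, so $h$ reaches $\mathcal{S}_G$ in a time bounded independently of the initial condition. The gains and $\theta$ are then rescaled so that this bound does not exceed the prescribed $t_{\mathrm{FX}}$ of Definition~\ref{fxt}. Finally, the lower bound $h\ge\beta_1\|e\|^2$ gives $\mathrm{diam}(\mathcal{S}_G)\le2\sqrt{\bar h/\beta_1}$, and we tune the gains so that this quantity is $<1$. The remark after Assumption~\ref{bound1} then ensures that $w$ does not push the voltages back out once $\mathbf{v}\in\mathcal{X}_v$.

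We expect the main obstacle to be this last step: reconciling a sub-level set with diameter $<1$ against the disturbance bound while keeping the convergence time at $t_{\mathrm{FX}}$. The paper's statement leaves these constants implicit, so we would first try to exploit that $d$ is an impulsive transient and argue on the interval where $\mathbf{v}\notin\mathcal{X}_v$. If that fails, we would fall back on choosing $\alpha_i$ large relative to $\beta_2(\gamma+d_{\max})$ and state the conclusion as practical fixed-time stability.
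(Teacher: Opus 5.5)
Your core argument is the paper's proof. The mismatch term is bounded by $\beta^\star h^{1/2}$ with $\beta^\star=\xi\delta\beta_2/\sqrt{\beta_1}$. The restriction $\eta\in(1,2]$ puts $1/2$ between $\mu_2$ and $\mu_1$, so $h^{1/2}$ is absorbed by $h^{\mu_1}$ when $h>1$ and by $h^{\mu_2}$ when $h\le 1$. The paper does exactly this case split on $\mathcal{S}=\{h\le1\}$, and your bound $h^{1/2}\le h^{\mu_1}+h^{\mu_2}$ packages it. Both arrive at $\dot h\le-(\alpha_1-\beta^\star)h^{\mu_1}-(\alpha_2-\beta^\star)h^{\mu_2}+\gamma_1$.

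The one real difference is the last step. The paper invokes the practical fixed-time corollary of Black et al., while your $\theta$-absorption proves that step directly. Your argument is sound: the map $s\mapsto\tilde\alpha_1s^{\mu_1}+\tilde\alpha_2s^{\mu_2}$ is increasing, which also gives forward invariance of $\{h\le\bar h\}$. It is self-contained and does not need $\alpha_1=\alpha_2$; it just yields a $\theta$-dependent time estimate instead of the closed form of Corollary~\ref{fx200}.

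Your final paragraph needs correcting in two places.

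First, ``rescaling the gains and $\theta$ so that the bound does not exceed the $t_{\mathrm{FX}}$ of Definition~\ref{fxt}'' cannot work. Because $0<\alpha_i-\beta^\star<\alpha_i$ and $1-\theta<1$, your bound $\eta\pi/\bigl(2(1-\theta)\sqrt{\tilde\alpha_1\tilde\alpha_2}\bigr)$ always exceeds $\eta\pi/(2\sqrt{\alpha_1\alpha_2})$. Enlarging $\alpha_i$ by scaling the feedback also scales $\xi$ in Assumption~\ref{bound3}, hence $\beta^\star$, by the same factor, so the ratio $\tilde\alpha_i/\alpha_i$ never improves. The step is also unnecessary. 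In the paper, the fixed time in this theorem is the perturbed one, computed from $\breve\alpha-\beta^\star$ in Corollary~\ref{fx200}, so simply take your initial-condition-independent bound as $t_{\mathrm{FX}}$.

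Second, Assumption~\ref{mda} is not a hypothesis of the theorem. The paper discards $d$ as an impulsive transient, so the offset is only $\gamma_1=\beta_2\gamma$. Carrying $\beta_2d_{\max}$ instead enlarges $\mathcal{S}_G$ beyond the level $1/\mathfrak{h}$ used in Corollary~\ref{fx200} and Theorem~\ref{fx400}.

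Also state explicitly that $\mathrm{diam}(\mathcal{S}_G)<1$ does not follow from $\min\{\alpha_1,\alpha_2\}>\beta^\star$ alone. The paper gets it only from the extra gain condition in the remark after Corollary~\ref{fx200} (equivalently $\mathfrak{h}>1$). That is the same additional requirement hidden in your ``tune the gains'', and it should be stated as a hypothesis.
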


\begin{proof}
Notably, $d(t)$ in~\eqref{sf100}, is treated as an \textit{impulsive transient}, thus we preclude it in the subsequent analysis for a given fixed-time horizon. Since $h(e)$ is the FxTS-CLF for the \textit{nominal system}, differentiating it with respect to~\eqref{sf100} yields
\begin{align}
\dot{h}(e)\,&=\,\nabla\,h(e)^{\top}\left(\hat{\mathfrak{B}}\,u(t)+\left(\mathfrak{B}-\hat{\mathfrak{B}}\right)\,u(t)+w(t)\right) \nonumber \\
&\leq\,-\alpha_1\,h(e)^{\mu_1}-\alpha_2\,h(e)^{\mu_2}\,+\,\beta_2\,\delta\,\xi\,\big|\big|e(t)\big|\big|\,+\,\underbrace{\beta_2\,\gamma}_{\gamma_1} \nonumber \\
&\leq\,-\alpha_1\,h(e)^{\mu_1}-\alpha_2\,h(e)^{\mu_2}\,+\,\underbrace{\frac{\beta_2\,\delta\,\xi}{\sqrt{\beta_1}}}_{\beta^{\star}}\,h(e)^{\frac{1}{2}}\,+\,\gamma_1. \label{gg}
\end{align}
Here the first inequality is obtained employing Assumption~\ref{bound1},~\ref{bound2} and~\ref{bound3}, while the last inequality is derived based on Remark~\ref{rem5}. Thereafter, we define a set $\mathcal{S}\,:=\,\left\lbrace e(t)\,:\,h(e)\,\leq\,1\right\rbrace$ and adopt two scenarios namely $e(t)\notin\,\mathcal{S}$ and $e(t)\in\,\mathcal{S}$. For the first scenario, rearranging the right hand side of~\eqref{gg}, we obtain 
\begin{multline*}
\dot{h}(e)\,\leq\,-\left( \alpha_1-\beta^{\star}\right)\,h(e)^{\mu_1}\,-\,\alpha_2\,h(e)^{\mu_2}\\
+\,\beta^{\star}\,\underbrace{\left( h(e)^{\frac{1}{2}}-h(e)^{\mu_1}\right)}_{\varepsilon_1(e)}\,+\,\gamma_1.
\end{multline*}
Since, $e(t)\notin\,\mathcal{S}$ (thus, $h(e)>\,1$) and $\eta\in\,\left( 1,\,2\right]$ (thus, $\mu_1\in\,\left[1.5,\,2\right)$), it leads to $\varepsilon_1(e)\leq\,0$. In consequence, the above equation is simplified as
\begin{align}
\dot{h}(e)\,\leq\,-\left( \alpha_1-\beta^{\star}\right)\,h(e)^{\mu_1}-\left( \alpha_2-\beta^{\star}\right)\,h(e)^{\mu_2}\,+\,\gamma_1. \label{c1}
\end{align}
Next considering $e(t)\in\,\mathcal{S}$ and performing algebraic rearrangement on~\eqref{gg} we find 
\begin{multline*}
\dot{h}(e)\,\leq\,-\alpha_1\,h(e)^{\mu_1}\,-\,\left( \alpha_2-\beta^{\star}\right)\,h(e)^{\mu_2}\\
+\,\beta^{\star}\,\underbrace{\left( h(e)^{\frac{1}{2}}-h(e)^{\mu_2}\right)}_{\varepsilon_2(e)}\,+\,\gamma_1.
\end{multline*}
Since $h(e)\leq\,1$ and $\mu_2\in\,\left( 0,\,0.5\right]$ (as $\eta\in\,\left( 1,\,2\right]$), it yields $\varepsilon_2(e)\leq\,0$. Consequently, one gets
\begin{align}
\dot{h}(e)\,\leq\,-\left( \alpha_1-\beta^{\star}\right)\,h(e)^{\mu_1}-\left( \alpha_2-\beta^{\star}\right)\,h(e)^{\mu_2}\,+\,\gamma_1. \label{c2}
\end{align}
By definition, $\min\,\left\lbrace \alpha_1,\,\alpha_2\right\rbrace>\frac{\xi\,\delta\,\beta_2}{\sqrt{\beta_1}}$, thus $\left( \alpha_1-\beta^{\star}\right)>\,0$ and $\left( \alpha_2-\beta^{\star}\right)>\,0$ in~\eqref{c1} and~\eqref{c2}. Moreover, $\mu_1,\,\mu_2$ also satisfy Definition~\ref{fxt}. Then following~\cite[corollary~1]{black2020quadratic}, we can assert that there exists a set $\mathcal{S}_G\subseteq\,\mathbb{R}^n$ such that for all $\mathbf{v}(t_0)\notin\,\mathcal{S}_G$, the trajectories of~\eqref{sf100} reach $\mathcal{S}_G$ in a fixed time $t_{\mathrm{FX}}$. This concludes the proof.
\end{proof} 

\smallskip

In essence, Theorem~\ref{fx100} establishes a dual-stage regulation mechanism: when $e(t)\notin\,\mathcal{S}$, $\mu_1$ dominates $\mu_2$ to drive $e(t)$ toward $\mathcal{S}$; once $e(t)$ enters $\mathcal{S}$, $\mu_2$ takes dominance to guide it into the ultimate goal set $\mathcal{S}_{G}$, see Fig.~\ref{picim}. To characterize this behaviour, the following corollary estimates the boundary of $\mathcal{S}_{G}$ and computes the bounds on the convergence time $t_c$. Hereinafter, for brevity, we define $\alpha_1\,=\,\alpha_2\,=\,\breve{\alpha}$.

\begin{figure}[htb!]
		\centering
		\includegraphics[width=0.8\columnwidth, height=50 mm]{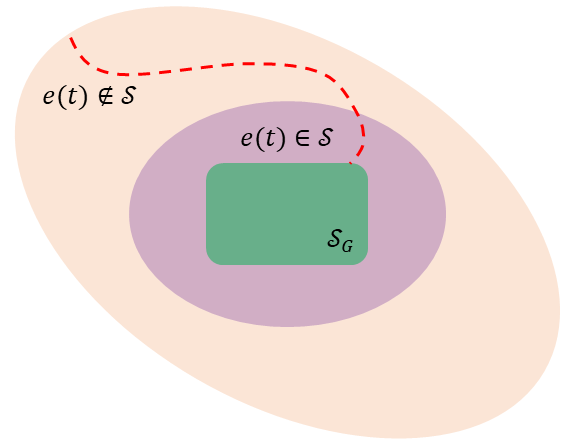}
		\caption{Pictorial implication of Theorem~\ref{fx100} for a given system trajectory.}
	     \label{picim}
		\end{figure}

\begin{corollary} \label{fx200}
Let us define $\mathfrak{h}\,=\,\frac{2\,\left( \breve{\alpha}-\beta^{\star}\right) }{\beta_2\,\gamma}$. Then, whenever $\mathbf{v}(t)\notin\,\mathcal{S}_G$, it converges to $\mathcal{S}_G$ at time $t_c\,\leq\,t_{\mathrm{FX}}$, where $\mathcal{S}_G\,:=\,\left\lbrace e(t)\,:\,h(e)\,\leq\,\frac{1}{\mathfrak{h}}\right\rbrace$ and
\begin{align*}
t_{\mathrm{FX}}\,:=\,\frac{\eta}{\left( \breve{\alpha}-\beta^{\star}\right)\,\sqrt{1-\left( \frac{1}{\mathfrak{h}}\right) ^2}}\,\left[\frac{\pi}{2}\,+\,\tan^{-1}\,\left( \frac{1}{\mathfrak{h}^2-1}\right)^{\frac{1}{2}}\right].  
\end{align*}
\end{corollary}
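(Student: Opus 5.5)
We start from the common differential inequality obtained in the proof of Theorem~\ref{fx100}: in both regimes $e(t)\notin\mathcal{S}$ and $e(t)\in\mathcal{S}$ we have \eqref{c1}--\eqref{c2}. With $\alpha_1=\alpha_2=\breve{\alpha}$ and $a:=\breve{\alpha}-\beta^{\star}>0$, this reads
\begin{align*}
\dot{h}\,\leq\,-a\left(h^{\mu_1}+h^{\mu_2}\right)+\gamma_1,\qquad \gamma_1=\beta_2\gamma .
\end{align*}
The plan is to pull out a fraction of the decay to absorb the constant $\gamma_1$, and then integrate the resulting scalar comparison ODE in closed form. This is the same mechanism as \cite[corollary~1]{black2020quadratic}, which I would follow with the specific constants here.

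\textbf{Step 1 (absorbing $\gamma_1$).} By AM--GM, since $\mu_1+\mu_2=2$, we have $h^{\mu_1}+h^{\mu_2}\geq 2h$. Writing $\gamma_1=\tfrac{2a}{\mathfrak{h}}$ (the definition of $\mathfrak{h}$), we get
\begin{align*}
\dot h\leq -a\left(h^{\mu_1}+h^{\mu_2}-\tfrac{2}{\mathfrak h}\right).
\end{align*}
The right-hand side is strictly negative whenever $h>1/\mathfrak{h}$, because then $h^{\mu_1}+h^{\mu_2}\geq 2h>2/\mathfrak{h}$. Hence $\mathcal{S}_G=\{h\leq 1/\mathfrak{h}\}$ is forward invariant and attractive. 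This requires $\mathfrak{h}>1$, i.e.\ $\breve{\alpha}-\beta^{\star}>\beta_2\gamma/2$, which I would state as an implicit standing requirement. It also makes the square roots in $t_{\mathrm{FX}}$ real.

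\textbf{Step 2 (comparison and time bound).} Substitute $y=h^{1/\eta}$, so that $h^{\mu_{1,2}}=h\,y^{\pm1}$ and $\dot y=\tfrac{1}{\eta}h^{-1}y\,\dot h$. This gives
\begin{align*}
\dot y\leq -\tfrac{a}{\eta}\left(y^2+1-\tfrac{2}{\mathfrak h}\,h^{-1}y\right).
\end{align*}
The awkward term is $h^{-1}y=y^{1-\eta}$. On the region $h\geq 1/\mathfrak{h}$ I would bound it by a quantity that makes the quadratic $y^2-\tfrac{2}{\mathfrak h}y+1$ appear. The discriminant of that quadratic is $1-1/\mathfrak{h}^2$, which is exactly the factor under the square root in $t_{\mathrm{FX}}$. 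Then, by the comparison lemma,
\begin{align*}
t_c\leq\frac{\eta}{a}\int_{y_\ast}^{\infty}\frac{dy}{(y-\tfrac1{\mathfrak h})^2+1-\tfrac1{\mathfrak h^2}}
=\frac{\eta}{a\sqrt{1-\mathfrak h^{-2}}}\left[\frac{\pi}{2}-\tan^{-1}\frac{y_\ast-\mathfrak h^{-1}}{\sqrt{1-\mathfrak h^{-2}}}\right].
\end{align*}
Extending the initial condition to infinity makes the bound uniform in $e(t_0)$. With the lower endpoint chosen so that the arctangent argument equals $-(\mathfrak h^2-1)^{-1/2}$, this reproduces the stated formula.

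\textbf{Main obstacle.} Step 2 is the hardest part. Replacing the term $\tfrac{2}{\mathfrak h}y^{1-\eta}$ by a linear term in $y$ is only legitimate on part of the range, since for $\eta\in(1,2]$ the inequality direction changes at $h=1$. I expect to split again at $\mathcal{S}$, as in Theorem~\ref{fx100}, and integrate separately on $\{h>1\}$ and $\{1/\mathfrak h<h\leq1\}$. Alternatively, I would verify the final arctangent form directly by citing the closed-form settling-time bound of \cite[corollary~1]{black2020quadratic} applied with $\alpha_1=\alpha_2=a$ and $\gamma_1=2a/\mathfrak{h}$. If the direct integration does not match exactly, I would fall back on that citation, since it yields precisely this expression with this set $\mathcal{S}_G$.
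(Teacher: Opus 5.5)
\textbf{There is a genuine gap, and it is the one you flag but do not close.}

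In Step~2, the lower endpoint that reproduces the stated formula is $y_\ast=0$. The condition $\frac{y_\ast-\mathfrak h^{-1}}{\sqrt{1-\mathfrak h^{-2}}}=-(\mathfrak h^2-1)^{-1/2}$ forces $y_\ast-\mathfrak h^{-1}=-\mathfrak h^{-1}$. So $t_{\mathrm{FX}}$ is the time the comparison system $\dot y=-\frac{a}{\eta}\bigl(y^2-\frac{2}{\mathfrak h}y+1\bigr)$ needs to travel from $\infty$ to the origin. That system dominates yours only where $y^{1-\eta}\le y$, i.e.\ on $\{h\ge 1\}$; this is the trade $\gamma_1\le\gamma_1 h$.

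What your computation legitimately proves is therefore that $e(t)$ enters $\mathcal S=\{h\le 1\}$ by $t_{\mathrm{FX}}$. Since $\mathfrak h>1$, the set $\mathcal S_G=\{h\le 1/\mathfrak h\}$ is strictly smaller, so the statement is not yet reached. Splitting at $\mathcal S$, as you propose, means bounding the sojourn in $\{1/\mathfrak h<h\le 1\}$. There the valid comparison is $\dot y\le-\frac{a}{\eta}\bigl(y^2+1-\frac{2}{\mathfrak h}y^{1-\eta}\bigr)$, and you would then have to prove something like
\begin{align*}
\int_{\mathfrak h^{-1/\eta}}^{1}\frac{dy}{y^2+1-\frac{2}{\mathfrak h}\,y^{1-\eta}}\;\le\;\int_{0}^{1}\frac{dy}{y^2-\frac{2}{\mathfrak h}\,y+1},
\end{align*}
which would fit both phases into $t_{\mathrm{FX}}$. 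This is a real quantitative estimate, not a formality. The left integrand equals $(1-\mathfrak h^{-1/\eta})^{-2}$ at the lower endpoint, so both sides blow up as $\mathfrak h\downarrow 1$: the left like $\log\frac{1}{\mathfrak h-1}$, the right like $(\mathfrak h-1)^{-1/2}$. The estimate needs the strict slack $h^{\mu_1}+h^{\mu_2}>2h$ for $h<1$, which your Step~1 uses only qualitatively (to get $\dot h<0$).

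\textbf{The fallback does not repair this.} The paper's proof is nothing more than a pointer to \cite[Theorem~2]{black2020quadratic}. (It is Theorem~2; corollary~1 is what the paper invokes for Theorem~\ref{fx100}.) So in that sense your fallback is the paper's route. However, the closed form $\frac{\eta}{a k_1}\bigl(\frac{\pi}{2}-\tan^{-1}k_2\bigr)$ with $k_1=\sqrt{1-\mathfrak h^{-2}}$ and $k_2=-(\mathfrak h^2-1)^{-1/2}$ is exactly what the $\{h\ge 1\}$ linearization produces. You cannot take for granted that the citation delivers it ``with this set $\mathcal S_G$''. You must check which sublevel set the cited result actually certifies, and if it is $\{h\le 1\}$, you still owe the estimate above.

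Your Step~1 is correct and should be kept: it gives forward invariance and attractivity of $\mathcal S_G$ under $\mathfrak h>1$, which is precisely the gain condition in the remark following the corollary.
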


\smallskip

The proof of this corollary is straightforward and follows~\cite[Theorem~2]{black2020quadratic}. It reveals that $t_{\mathrm{FX}}$ does not depend on the magnitude of the initial disturbance. Instead it relies on the bound of external disturbance $\gamma$, impedance estimate $\delta$ and lipschitz bound $\xi$. 

\smallskip

\begin{remark}
To restrict $\mathrm{diam}\left( \mathcal{S}_G\right)<\,1$, the control gain $\breve{\alpha}$ must be arbitrarily large. Specifically, by Corollary~\ref{fx200}, setting $\breve{\alpha}\,>\,\frac{\beta_2}{2}\left( \gamma\,+\,\frac{2\delta\,\xi}{\sqrt{\beta_1}}\right) $ is sufficient to ensure $\mathrm{diam}\left( \mathcal{S}_G\right)<\,1$. It shows that $\breve{\alpha}$ must dominates both $\gamma$ and $\gamma$ to maintain the desired diameter limit.
\end{remark}

\smallskip

The aforesaid results provide bounds on the control gains and design parameters for guaranteeing fixed-time stability with respect to a goal set $\mathcal{S}_G$, and also estimate it's diameter. Considering this, we attempt to reformulate our problem set-up, such that these results can be leveraged to ensure fixed-time voltage regulation. To this end, the following lemma represents $\mathcal{X}_v$ in terms of $h(e)$

\smallskip

\begin{lemma} \label{map}
Let $h(e)$ be an FxTS-CLF with $\beta_1\,\geq\,1$. Then $e(t)\in\,\mathcal{S}_v$ implies $v(t)\in\,\mathcal{X}_v$, where $\mathcal{S}_v$ is defined as
\begin{align*}
\mathcal{S}_v\,:=\,\left\lbrace e(t)\,:\,h(e)\,\leq\,\left( 0.05\,\mathbf{v}_{\mathrm{nom}}\right)^2\right\rbrace. 
\end{align*}
\end{lemma}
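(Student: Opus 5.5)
The plan is to chain the quadratic lower bound of Remark~\ref{rem5} with the definition of $\mathcal{S}_v$, and then pass from the Euclidean norm to componentwise bounds. Fix $e(t)\in\mathcal{S}_v$. First consider $e(t)\notin\mathcal{S}_G$, so that Remark~\ref{rem5} applies. It gives $\beta_1\|e(t)\|^2\le h(e)$. Combining this with $h(e)\le(0.05\,\mathbf{v}_{\mathrm{nom}})^2$ and the hypothesis $\beta_1\ge 1$ yields
\begin{align*}
\|e(t)\|^2\,\le\,\beta_1\|e(t)\|^2\,\le\,h(e)\,\le\,\left(0.05\,\mathbf{v}_{\mathrm{nom}}\right)^2,
\end{align*}
and hence $\|e(t)\|\le 0.05\,\mathbf{v}_{\mathrm{nom}}$.

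Next I pass to components. For every node $i$ we have $|e_i(t)|\le\|e(t)\|_\infty\le\|e(t)\|$, so $|\mathbf{v}_i(t)-\mathbf{v}_{\mathrm{nom}}|\le 0.05\,\mathbf{v}_{\mathrm{nom}}$. Equivalently, $0.95\,\mathbf{v}_{\mathrm{nom}}\le \mathbf{v}_i(t)\le 1.05\,\mathbf{v}_{\mathrm{nom}}$, which is exactly $\mathbf{v}(t)\in\mathcal{X}_v$ by the footnote defining $\mathcal{X}_v$. Here I read $\mathbf{v}_{\mathrm{nom}}$ as a scalar per-unit nominal value, uniform across nodes. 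If it were a vector, I would interpret the threshold with its smallest entry so that the bound still holds componentwise.

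The main obstacle is that Remark~\ref{rem5} asserts the quadratic lower bound only on $\mathcal{Z}\setminus\mathcal{S}_G$, whereas $\mathcal{S}_v$ may intersect $\mathcal{S}_G$. For $e\in\mathcal{S}_G\cap\mathcal{S}_v$ I would argue in one of two ways. The first is to use $h$ positive definite and continuously differentiable on the compact set $\mathcal{Z}$ to extend $h(e)\ge\beta_1\|e\|^2$ to all of $\mathcal{Z}$, possibly after shrinking $\beta_1$; the hypothesis $\beta_1\ge 1$ is then read as applying to this global constant. Failing that, I would use the Remark following Corollary~\ref{fx200}, which gives $\mathrm{diam}(\mathcal{S}_G)<1$, together with the choice of gains, to ensure $\mathcal{S}_G\subseteq\mathcal{S}_v$ and that $\mathcal{S}_G$ itself maps into $\mathcal{X}_v$. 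With either route the chain of inequalities above applies on all of $\mathcal{S}_v$, and the lemma follows.
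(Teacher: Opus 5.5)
Your main chain, $\|e\|^2\le\beta_1\|e\|^2\le h(e)\le(0.05\,\mathbf{v}_{\mathrm{nom}})^2$ followed by $|e_i|\le\|e\|$, is exactly the paper's proof. The paper applies Remark~\ref{rem5} on all of $\mathcal{S}_v$ with no case split. You are right that Remark~\ref{rem5} only gives the lower bound on $\mathcal{Z}\setminus\mathcal{S}_G$. However, neither of your two patches for $e\in\mathcal{S}_G\cap\mathcal{S}_v$ is sound.

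\emph{First patch.} Positive definiteness, continuous differentiability and compactness do not give a quadratic lower bound near the origin. For example, $h(z)=\|z\|^4$ violates $h\ge\beta_1\|z\|^2$ for $0<\|z\|<\sqrt{\beta_1}$, whatever $\beta_1>0$. Shrinking $\beta_1$ would also break $\beta_1\ge 1$. Moreover, Definition~\ref{fxt} imposes $h\le 0$ on $\mathcal{S}_G$, so $h(e)\ge\beta_1\|e\|^2$ cannot hold there at any $e\neq 0$. Declaring $\beta_1\ge 1$ to be a global constant adds an assumption rather than deriving one.

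\emph{Second patch.} The inclusion $\mathcal{S}_G\subseteq\mathcal{S}_v$ goes the wrong direction: it only returns you to the case you are trying to handle. The bound $\mathrm{diam}(\mathcal{S}_G)<1$ p.u.\ is far weaker than the radius $0.05\,\mathbf{v}_{\mathrm{nom}}$ that $\mathcal{X}_v$ requires. Both facts also rest on gain conditions that are not hypotheses of the lemma, and the proof of Theorem~\ref{fx400} itself cites this lemma. Finally, ``$\mathcal{S}_G$ maps into $\mathcal{X}_v$'' is precisely what you were supposed to prove on that region.

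The case closes in one line from the definitions as written. An FxTS-CLF is positive definite, so $h(e)>0$ for every $e\neq 0$, while Definition~\ref{fxt} also requires $h\le 0$ on $\mathcal{S}_G$. Hence $\mathcal{S}_G\subseteq\{0\}$, and at $e=0$ you have $\mathbf{v}(t)=\mathbf{v}_{\mathrm{nom}}\in\mathcal{X}_v$ trivially. If you instead use the positive sublevel set $\mathcal{S}_G=\{h\le 1/\mathfrak{h}\}$ of Corollary~\ref{fx200}, nothing in the stated hypotheses covers that region. You would then have to assume, as the paper implicitly does, that the bound of Remark~\ref{rem5} holds throughout $\mathcal{S}_v$.
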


\smallskip

\begin{proof}
When $e(t)\in\mathcal{S}_v$, it holds $h(e)\leq\left( 0.05\,\mathbf{v}_{\mathrm{nom}}\right)^2$. Since, by definition, $\beta_1\geq\,1$, following Remark~\ref{rem5}, we find $\big|\big|e(t)\big|\big|^2\leq\,\left( 0.05\,\mathbf{v}_{\mathrm{nom}}\right)^2$, which further leads to $\big|e(t)\big|\leq\,0.05\,\mathbf{v}_{\mathrm{nom}}$. Simplifying this inequality, one gets $\mathbf{v}(t)\in\,\left[0.95\mathbf{v}_{\mathrm{nom}},\,1.05\mathbf{v}_{\mathrm{nom}}\right]\,=\,\mathcal{X}_v$, which verifies the lemma.
\end{proof}

\smallskip

Lemma~\ref{map} directly implies that if the voltage violation vector $e(t)$ converges to the set $\mathcal{S}_v$, the voltage trajectory $\mathbf{v}(t)$ reaches the safe limit $\mathcal{X}_{v}$. Leveraging this relationship, Theorem~\ref{fx400} derives a bound on $\breve{\alpha}$ to ensure $\mathcal{S}_G \subseteq \mathcal{S}_v$, thereby guaranteeing fixed-time voltage stability. Throughout this analysis, we assume $\mathbf{v}_{\mathrm{nom}}\,=\,1\,\mathrm{p.u}$ without loss of generality.

\smallskip

\begin{theorem} \label{fx400}
If $\breve{\alpha}\geq\,\beta_2\,\left( 10\,+\,\frac{\delta\,\xi}{\sqrt{\beta_1}}\right)$, then $\mathcal{S}_G\subseteq\,\mathcal{S}_v$. 
\end{theorem}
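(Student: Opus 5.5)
Both sets are sub-level sets of the same function $h(e)$, so I will reduce $\mathcal{S}_G\subseteq\mathcal{S}_v$ to a comparison of their thresholds. By Corollary~\ref{fx200}, $\mathcal{S}_G=\{e:\,h(e)\leq 1/\mathfrak{h}\}$. By Lemma~\ref{map}, with $\mathbf{v}_{\mathrm{nom}}=1$ p.u., $\mathcal{S}_v=\{e:\,h(e)\leq (0.05)^2=1/400\}$. Hence it suffices to show $1/\mathfrak{h}\leq 1/400$, i.e., $\mathfrak{h}\geq 400$. Written out with $\mathfrak{h}=\frac{2(\breve{\alpha}-\beta^{\star})}{\beta_2\gamma}$, this becomes
\begin{align*}
\breve{\alpha}\,\geq\,\beta^{\star}\,+\,200\,\beta_2\,\gamma\,=\,\beta_2\left(200\,\gamma\,+\,\frac{\delta\,\xi}{\sqrt{\beta_1}}\right).
\end{align*}

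Next I would evaluate $\gamma$ from Assumption~\ref{bound1}. Since $\mathcal{X}_v=[0.95,\,1.05]$ p.u., we get $\gamma=(\mathbf{v}_{\max}-\mathbf{v}_{\min})/2=0.05$. Therefore $200\gamma=10$, and the sufficient condition above is exactly the hypothesis $\breve{\alpha}\geq\beta_2\bigl(10+\frac{\delta\xi}{\sqrt{\beta_1}}\bigr)$. Monotonicity of sub-level sets in the threshold then gives the inclusion. I would also note that the hypothesis implies $\breve{\alpha}>\beta^{\star}$, so $\mathfrak{h}>0$ and Theorem~\ref{fx100} and Corollary~\ref{fx200} are in force. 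Moreover $\mathfrak{h}\geq 400>1$, so the square root and $\tan^{-1}$ appearing in $t_{\mathrm{FX}}$ are well defined.

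The main obstacle is the bookkeeping of constants rather than any analysis. The condition reduces cleanly to the stated bound only with the convention $\mathbf{v}_{\mathrm{nom}}=1$ and the specific value $\gamma=0.05$. If $\gamma$ were instead read as a vector norm over all $n$ buses, the constant $10$ would change. I would therefore state explicitly that $\gamma$ is taken as the scalar per-unit half-width, consistent with Assumption~\ref{bound1}.

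Lemma~\ref{map} additionally requires $\beta_1\geq 1$. That is not needed for the inclusion $\mathcal{S}_G\subseteq\mathcal{S}_v$ itself, since both sets are defined through $h$. It is needed only for the subsequent interpretation that the voltages lie in $\mathcal{X}_v$, so I would flag it as a standing assumption for that interpretation.
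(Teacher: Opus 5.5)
Your proposal is correct and follows the paper's proof essentially verbatim. Both arguments reduce $\mathcal{S}_G\subseteq\mathcal{S}_v$ to the threshold comparison $1/\mathfrak{h}\leq(0.05)^2$, substitute $\mathfrak{h}=2(\breve{\alpha}-\beta^{\star})/(\beta_2\gamma)$ with $\gamma=0.05$ and $\beta^{\star}=\beta_2\delta\xi/\sqrt{\beta_1}$, and rearrange to $\breve{\alpha}-\beta^{\star}\geq 10\beta_2$. Your side remarks are accurate additions that the paper leaves implicit: $\mathfrak{h}\geq 400>1$ keeps $t_{\mathrm{FX}}$ well defined, and $\beta_1\geq 1$ matters only for reading $\mathcal{S}_v$ as lying inside $\mathcal{X}_v$, not for the inclusion itself.
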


\smallskip

\begin{proof}
In order to establish $\mathcal{S}_G\subseteq\,\mathcal{S}_v$, following Theorem~\ref{fx200} and~Lemma~\ref{map}, we need to ensure  $\frac{1}{\mathfrak{h}}\,\leq\,\left( 0.05\,\mathbf{v}_{\mathrm{nom}}\right)^2$. Revisiting $\mathfrak{h}$ from Corollary~\ref{fx200}, and defining $\gamma\,=\,0.05\,\mathbf{v}_{\mathrm{nom}}$, the prior inequality can be written as $\left( \breve{\alpha}-\frac{\beta_2\,\delta\,\xi}{\sqrt{\beta_1}}\right)\geq\,10\,\beta_2$. Thereafter an algebraic rearrangement completes the proof.  
\end{proof}

\smallskip

Theorem~\ref{fx100} computes $u(t)$, that steers $\mathbf{v}(t)$ to the goal set $\mathcal{S}_{G}$ within a fixed time $t_{\mathrm{FX}}$, while Theorem~\ref{fx400} bounds $\breve{\alpha}$ to ensure $\mathcal{S}_G \subseteq \mathcal{S}_v$. As Lemma~\ref{map} establishes $\mathcal{S}_v \subseteq \mathcal{X}_v$, convergence to $\mathcal{S}_{v}$ implies voltage stability. Finally, Corollary~\ref{fx200} guarantees fixed-time voltage regulation by proving the convergence of $e(t)$ towards $\mathcal{S}_{G}\subseteq \mathcal{S}_v\subseteq \mathcal{X}_v$ within fixed-time $t_{\mathrm{FX}}$.


\subsection{Limitation of Fixed-Time Convergence under Bounded Control}
The analysis presented in Theorem~\ref{fx100} and~\ref{fx400} do not consider any bounds on the control input to ensure a \textit{uniformly bounded voltage recovery time}, whereas, in practice, distribution network controllers often exhibit limited control capabilities due to strict apparent power and thermal limits. Large, rapid grid disturbances can easily saturate controller outputs, nullifying the fixed-time convergence and transforming the closed-loop system to a \textit{finite-time} paradigm until the voltage violation $e(t)$ returns to the \textit{safe operating envelope}. Theorem~\ref{fts} explicitly formalizes this behaviour; ahead of this, we define the \textit{safe operating envelope} for completeness.

\smallskip

\begin{definition}
The safe operating region $\Omega _{0}$, is defined as the following compact set
\begin{align*}
\Omega_0 \triangleq \left\{ e(t) \in \mathbb{R}^n \,:\,~||e(t)||\,\leq\,e_{\max}\right\rbrace,
\end{align*}
where $e_{\max}$ is the maximum allowable euclidean norm of $e(t)$ for which the fixed-time control law strictly stays within the control limit, i.e., $u(t)\in\mathcal{X}_u\,:=\,\left[-u_{\max},\,u_{\max}\right]$, guaranteeing fixed-time convergence to $\mathcal{S}_G$.
\end{definition}

\smallskip

\begin{theorem} \label{fts}
Suppose $e(0)\notin\,\Omega_0$ due to an arbitrarily large grid disturbance $d(t)$ in~\eqref{sf100}, where $d(t)$ satisfies Assumption~\ref{mda}. We define a (saturated) control update $u_{\mathrm{sat}}(t)\in\mathcal{X}_u$, as
\begin{equation}
u_{\mathrm{sat}}(t) = -u_{\max} \frac{\mathfrak{B}^{-1}e(t)}{||\mathfrak{B}^{-1}e(t)||}, \label{inp}
\end{equation}
where $u_{\max}\,=\,\frac{1}{2}\mathrm{diam}\left( \mathcal{X}_u\right)$. If $u_{\mathrm{sat}}(t)$ in~\eqref{inp}, dominates $d_{\max}$ such that $\sigma_{\min}(\mathfrak{B})\,u_{\max}\,>\,d_{\max}$, where $\sigma_{\min}(\mathfrak{B})$ refers the minimum singular value of $\mathfrak{B}$, then, for any $e(0)\notin\,\Omega_0$, it allows $e(t)$ to converge $\Omega_0$ within a finite settling time $t_{\text{FT}}$, which is upper-bounded by
\begin{align*}
t_{\mathrm{FT}}\,\leq\,\frac{||e(0)||\,-\,e_{\max}}{\sigma_{\min}(\mathfrak{B})\,u_{\max}\,-\,d_{\max}}.
\end{align*}
\end{theorem}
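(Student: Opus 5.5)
The plan is a direct Lyapunov argument with the non-smooth but, away from the origin, differentiable function $V(e)=\|e(t)\|$, followed by a comparison (integration) step. First I would record that $\mathfrak{B}$ is invertible: by \eqref{ldf}, $\mathfrak{B}=\mathbf{A}^{-1}(\mathbf{I}+\Phi\Psi)^{-1}\mathcal{D}_{\mathbf{x}}\mathbf{A}^{-\top}$ is a product of nonsingular factors, taking the positivity of the line reactances and the invertibility of $\mathbf{I}+\Phi\Psi$ implicit in \eqref{ldf}. Hence $\sigma_{\min}(\mathfrak{B})>0$ and $u_{\mathrm{sat}}$ in \eqref{inp} is well defined whenever $e(t)\neq 0$. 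In particular it is well defined on $\mathbb{R}^n\setminus\Omega_0$, where $\|e\|>e_{\max}>0$. Moreover $\|u_{\mathrm{sat}}(t)\|=u_{\max}$, so the input is admissible.

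Next I would substitute $u_{\mathrm{sat}}$ into the true dynamics \eqref{sf100}. Here $\hat{\mathfrak{B}}u+\chi(u)=\mathfrak{B}u$, so the estimation error plays no role, because the saturated law is written in terms of $\mathfrak{B}$ itself. This gives
\begin{align*}
\dot e(t) = -u_{\max}\,\frac{e(t)}{\|\mathfrak{B}^{-1}e(t)\|} + w(t)+d(t).
\end{align*}
Differentiating $V$ along trajectories outside $\Omega_0$ yields
\begin{align*}
\dot V = \frac{e^{\top}\dot e}{\|e\|} \leq -u_{\max}\frac{\|e\|}{\|\mathfrak{B}^{-1}e\|} + \|w(t)+d(t)\|.
\end{align*}
Using $\|\mathfrak{B}^{-1}e\|\leq \|e\|/\sigma_{\min}(\mathfrak{B})$, the first term is at most $-\sigma_{\min}(\mathfrak{B})\,u_{\max}$. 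Applying Assumption~\ref{mda} to the disturbance term then gives the uniform decrease
\begin{align*}
\dot V \leq -\bigl(\sigma_{\min}(\mathfrak{B})u_{\max}-d_{\max}\bigr) =: -c,
\end{align*}
with $c>0$ by the hypothesis of the theorem.

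Finally, I would integrate with a comparison argument. As long as $e(s)\notin\Omega_0$ on $[0,t]$, we have $\|e(t)\|\leq\|e(0)\|-c\,t$. The right-hand side reaches $e_{\max}$ at $t=(\|e(0)\|-e_{\max})/c$, so the trajectory must enter $\Omega_0$ no later than that time, which is the stated bound on $t_{\mathrm{FT}}$. Once inside $\Omega_0$, the fixed-time law of Theorem~\ref{fx100} is admissible by the definition of $\Omega_0$, which accounts for the subsequent behaviour, though this is not part of the claim. Existence of solutions holds because $u_{\mathrm{sat}}$ is locally Lipschitz away from $e=0$.

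The main obstacle is the persistent disturbance $w(t)$. The stated bound contains only $d_{\max}$, whereas \eqref{sf100} also carries $w$ with $\|w\|\leq\gamma$ by Assumption~\ref{bound1}. I would first try to handle this by interpreting $d_{\max}$ as a bound on the total exogenous term $w+d$ during the saturated phase, which is reasonable since the transient $d$ dominates there. The fallback, if that interpretation is not acceptable, is to carry $\gamma$ explicitly: the condition becomes $\sigma_{\min}(\mathfrak{B})u_{\max}>d_{\max}+\gamma$, and the denominator of the bound becomes $\sigma_{\min}(\mathfrak{B})u_{\max}-d_{\max}-\gamma$. I would state whichever convention is adopted explicitly, so that the bound on $t_{\mathrm{FT}}$ is correct as written.
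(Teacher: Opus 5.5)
Your proposal is correct and is essentially the paper's own argument. The paper takes $V(e)=\frac{1}{2}e^{\top}e$, bounds $\dot V\leq -\|e\|\bigl(\sigma_{\min}(\mathfrak{B})u_{\max}-d_{\max}\bigr)$ using the same singular-value and Cauchy--Schwarz steps, then divides by $\|e\|$ to reach your inequality $\frac{d}{dt}\|e\|\leq -c$ and integrates; it disposes of $w(t)$ simply by presuming $\|d(t)\|\gg\|w(t)\|$, which is your first convention, so flagging that step explicitly (with the $\gamma$-corrected fallback) is, if anything, more careful than the original.
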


\smallskip

\begin{proof}
Since $e(0)\notin\,\Omega_0$, it saturates the control input $u(t)$. Thus, substituting $u_{\mathrm{sat}}(t)$ in~\eqref{inp} to~\eqref{sf100}, we find
\begin{align}
\dot{e}(t)\,=\,-u_{\max} \frac{e(t)}{||\mathfrak{B}^{-1}e(t)||}\,+\,w(t)\,+\,d(t). \label{ad}
\end{align}
Thereafter, selecting the Lyapunov function $V(e)\,=\,\frac{1}{2}e^{\top}\,e$, and computing its derivative with respect to~\eqref{ad} leads to
\begin{align}
\dot{V}(e(t))\,&=\,-u_{\max} \frac{||e(t)||^2}{||\mathfrak{B}^{-1}e(t)||}\,+\,e(t)^{\top}\left( w(t)\,+\,d(t)\right)\nonumber \\
& \leq\,- ||e(t)||\,\underbrace{\left( \sigma_{\min}(\mathfrak{B})\,u_{\max}\,-\,d_{\max}\right)}_{\theta}. \label{abcd}
\end{align}
The above inequality is obtained employing \textit{singular value} and \textit{Cauchy-Schwarz inequality} and presuming $||d(t)||\gg ||w(t)||$, which further reduces to $\dot{V}(e(t))\,\leq\,-\sqrt{2}\,\theta\,V(e(t))^{\frac{1}{2}}$. Since, by definition, $\theta\,>\,0$ in~\eqref{abcd}, it infers that whenever $e(0)\notin\,\Omega_0$, the input $u_{\mathrm{sat}}(t)$ brings back $e(t)$ inside $\Omega_0$ in \textit{finite time}. Simplifying $\dot{V}(e(t)) = ||e(t)||\frac{d||e(t)||}{dt}$, and dividing both sides of~\eqref{abcd} by $||e(t)||$, we find
\begin{align*}
\frac{d||e(t)||}{dt}\,\leq\,-\left( \sigma_{\min}(\mathfrak{B})\,u_{\max}\,-\,d_{\max}\right).
\end{align*}
Integrating both sides from $t = 0$ to $t = t_{\mathrm{FT}}$ and across the spatial transition from $||e(0)||$ down to the targeted linear boundary radius $e_{\max}$ provides
\begin{align*}
e_{\max}\,-\,||e(0)||\,\leq\,-\left( \sigma_{\min}(\mathfrak{B})\,u_{\max}\,-\,d_{\max}\right)\,t_{\mathrm{FT}}.
\end{align*}
Rearranging the terms to isolate $t_{\mathrm{FT}}$ yields the upper bound. This concludes the proof.
\end{proof}

\smallskip

Theorem~\ref{fts} shows that if the voltage violation vector initializes outside the safe operating limit, i.e., $e(0)\,\notin\,\Omega_0$, the controller outputs will get saturated. Consequently, the fixed-time stability structure temporarily collapses, transforming the closed-loop system into a pure finite-time reaching law. Then the system pays a linear temporal penalty $t_{\mathrm{FT}}$ that scales directly with the initial violation vector $||e(0)||$ until the trajectory enters $\Omega_0$, at which point the unconstrained fixed-time convergence mechanism is restored. This behaviour has also been depicted in Fig.~\ref{picim5} for completeness.

\begin{figure}[htb!]
		\centering
		\includegraphics[width=1\columnwidth, height=50 mm]{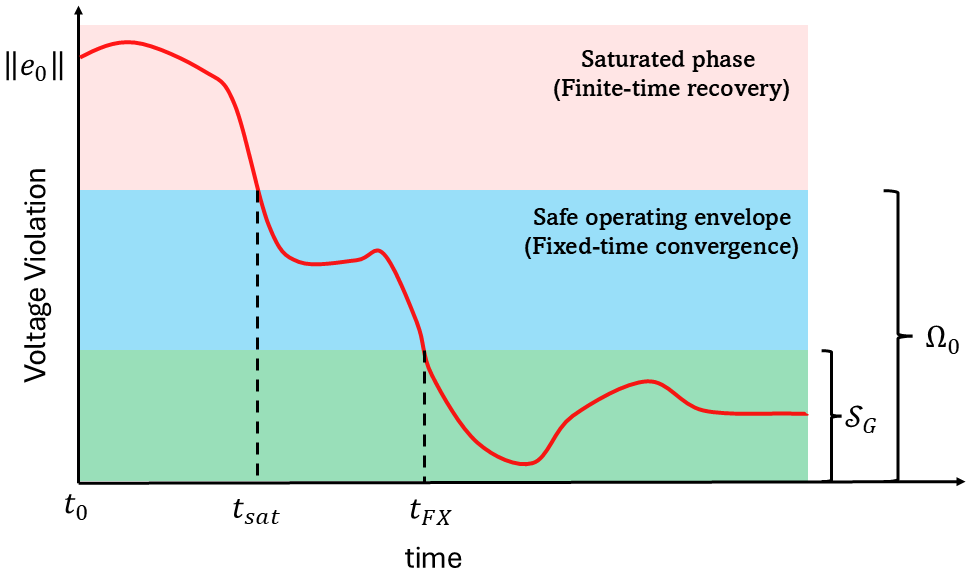}
		\caption{Illustration of Theorem~\ref{fts} for a given voltage violation trajectory.}
	     \label{picim5}
		\end{figure}
		
%
%

\vspace{-0.4 cm}

\subsection{Synthesizing Fixed-Time Control under Input Constraints}
This section synthesizes $u(t)$ under input constraint, that solves~\eqref{rc100} in fixed time. The proposed approach utilizes the interplay between~FxTS-CLF in Definition~\ref{fxt}, and QP. It first transforms~\eqref{rc100} to an equivalent QP problem and then translates the conditions in Theorem~\ref{fx100} and~\ref{fx400} to the constraint set. Considering these modifications, for $\mathbf{v}(t)\notin\,\mathcal{X}_v$, we equivalently formulate~\eqref{rc100} as the following QP problem:
\begin{subequations} \label{qp}
\begin{align}
\min_{u,\,\xi} \quad \frac{1}{2}u^\top{P}u\,& +\,\left( \frac{1}{2}+c\right) \nu^2 \label{qp1}\\
\textrm{s.\,t.} \quad \begin{bmatrix}
\mathbf{I}_{\mathcal{C}} \\ -\mathbf{I}_{\mathcal{C}}
\end{bmatrix}\,u\,& \leq\,\begin{bmatrix}
u_{\max} \\ -u_{\max}
\end{bmatrix} \label{qp2}\\
\nabla\,h(e)^\top\mathrm{\hat{\mathfrak{B}}}u\,\leq\,&\nu\,h(e)\,-\alpha_1\,\max\left\lbrace 0,\,h(e)\right\rbrace^{\mu_1} \nonumber \\
 &-\alpha_2\,\max\left\lbrace 0,\,h(e)\right\rbrace^{\mu_2}.\label{qp3}
\end{align}
\end{subequations}
Here $c>\,0,\,P\succ\,0$, and $\nu\in\mathbb{R}$ is the slack variable. Per Theorems~\ref{fx100} and \ref{fx400}, we select $\mu_1,\,\mu_2$, and set $\min\left\lbrace \alpha_1,\,\alpha_2\right\rbrace\geq\,\beta_2\,\left( 10\,+\,\frac{\delta\,\xi}{\sqrt{\beta_1}}\right)$ to ensure fixed-time voltage stability. Crucially, these choices do not affect feasibility of~\eqref{qp}. The set $\mathcal{X}_{u}$ in~\eqref{rc500}, bounded by $\left[ -u_{\max},\,u_{\max}\right] $ in~\eqref{qp2}, which restricts inputs throughout the optimization window. Solving~\eqref{qp} pointwise yields $u(t)$, which regulates $\mathbf{v}(t)$ toward $\mathcal{X}_{v}$ within a fixed time. Notably, if one selects $\nu\,=\,0$ in~\eqref{qp}, the input constraints~\eqref{qp2} can make~\eqref{qp} infeasible for some $\mathbf{v}(t)$. Therefore, the slack variable $\nu$ is introduced to uphold the feasibility of~\eqref{qp}; see Theorem~\ref{fa} for the feasibility study. In essence, the above formulation works as a fixed time control synthesizer in scenarios where the control inputs are large enough.while for other scenarios, it automatically synthesizes finite time control actions.

\smallskip

\begin{theorem} \label{fa}
Problem~\eqref{qp} is feasible for all $\mathbf{v}(t)\notin\,\mathcal{X}_v$, i.e., for each $\mathbf{v}(t)\notin\,\mathcal{X}_v$, there exist a $\left( u(t),\,\nu\right)\in\,\mathbb{R}^{n+1} $, which satisfies~\eqref{qp2} and~\eqref{qp3}. 
\end{theorem}

\smallskip

\begin{proof}
Select any $\bar{u}\in\,\left[ -u_{\max},\,u_{\max}\right] $, such that it satisfies the input constraint in~\eqref{qp2}. Furthermore, since $h(e)\,>\,0$ for $\mathbf{v}(t)\notin\,\mathcal{X}_v$, we find that
\begin{align*}
\bar{\nu}(e)\,=\,\frac{\nabla\,h(e)^\top\mathrm{\hat{\mathfrak{B}}}\bar{u}+\alpha_1\,h(e)^{\mu_1}+\alpha_2\,h(e)^{\mu_2}}{h(e)}
\end{align*}
is well-defined. Thus, for $u(t)\,=\,\bar{u}$ and $\nu\,=\,\bar{\nu}$, constraint~\eqref{qp3} also holds with equality. Therefore, $\left( \bar{u},\,\bar{\nu}\right) $ satisfies~\eqref{qp2} and~\eqref{qp3} for all $\mathbf{v}(t)\notin\,\mathcal{X}_v$, which verifies the feasibility of~\eqref{qp}.
\end{proof}

\begin{remark}
Our underlying control framework can easily be integrated with any offline or online estimation technique that yields a bounded parameter estimation error $\left( \mathfrak{B}-\hat{\mathfrak{B}}\right)$, thereby satisfying Assumption~\ref{bound2}. For instance, one could deploy the classical offline iterative least-squares algorithm~\cite{picallo2022adaptive}, though its performance remains sensitive to data quality and quantity. Conversely, the online convex-body chasing method in~\cite{yeh2024online,chowdhury2025provably} overcomes this data dependency, offering remarkable robustness against modelling errors and adversarial noise despite lower optimal performance in large-scale systems.
\end{remark}

\section{Numerical Case Studies} \label{ncs}
This section tests our algorithms on a benchmark IEEE distribution network~\cite{dolatabadi2020enhanced} to verify its efficacy.

\noindent \textbf{Network setup:}~We consider the standard IEEE $33$ bus distribution network, which has $\mathbf{v}_{\text{nom}}\,=\,12.66~\text{kV}$, and $\left[ \mathbf{v}_{\min},\,\mathbf{v}_{\max}\right]=\left[ 0.95,\,1.05\right]~\text{pu}=\left[ 12.03,\,13.29\right]~\text{kV}$. To induce voltage fluctuation in this network, we introduce  \textit{four} solar PV panels of capacity $4$ MW (each) at bus $2,\,5,\,12$ and $18$, see Fig.~\ref{33bds}. We simulate daytime \textit{over-voltage} scenarios driven by peak PV generation and night time \textit{under-voltage} conditions caused by high load demands without adequate generation. The simulation spans a consecutive $48$-hour window. The solar injection data (sourced from~\cite{data150719}) and load profiles are sampled at \(15\)-minute intervals and shown in Fig.~\ref{fig1000} (Left plot).

\begin{figure}[htb!]
		\centering
		\includegraphics[width=0.8\columnwidth,trim=0cm 0cm 0.0cm 0cm, clip]{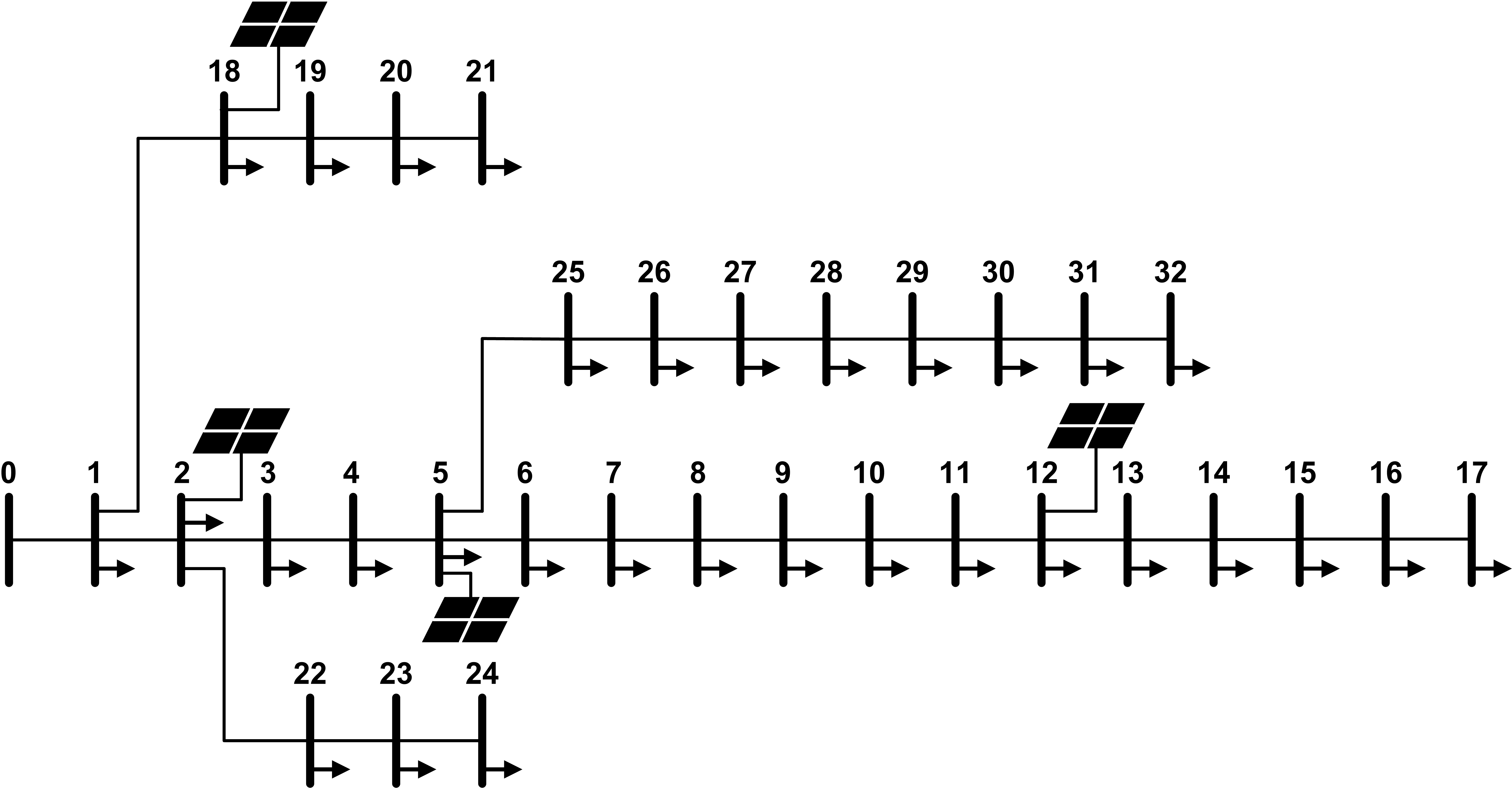}
		\caption{\small{Illustration of the IEEE $33$ bus distribution network.}} \label{33bds}
		\end{figure}	

We compute $\mathbf{v}(t)$ of some particular buses in Fig~\ref{fig1000} (Right plot), using \textit{pandapower}~\cite{thurner2018pandapower}, which solves the exact nonlinear branch flow equations of the (modified) network. We observe that integration of PV panels to the network deviates voltage beyond safe limits, causing both over- and under-voltage violations, which motivates us to deploy \eqref{qp}.

\begin{figure}[htb!]
		\centering
		\includegraphics[width=1.0\columnwidth, height=40 mm]{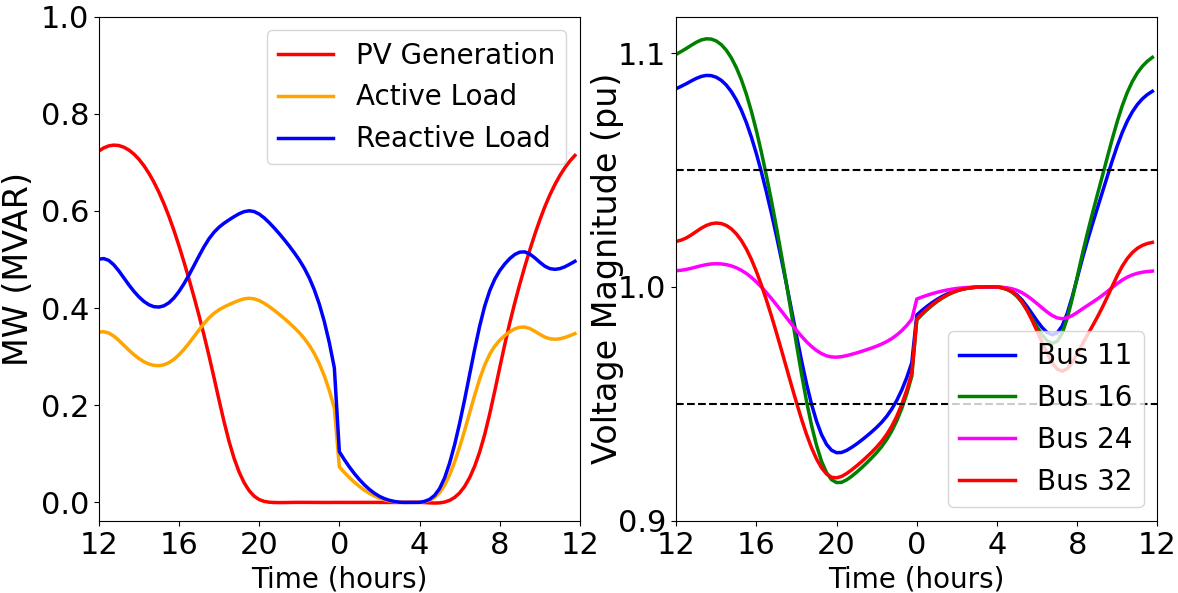}
		\caption{\small{Left:~Active, reactive power demand profile, solar PV profile of the underlying distribution network. Right: uncontrolled voltage profile of bus $11, 16, 24$ and $32$.}} \label{fig1000}
		\end{figure}
		
\noindent \textbf{Choice of appropriate baselines:}~To verify the efficacy of~\eqref{qp}, we consider two recent \textit{baselines}, e.g., i) Robust Predictive Control (\textit{Robust PC})~\cite{yeh2024online}:~an online framework that jointly learns network model and stable controller, and attains finite-mistake guarantee for real-time voltage violation. ii) Exponential CLF (\textit{E-CLF})-based control~\cite{colot2024optimal}:~An algorithm that regulates network voltage using control barrier functions and projected gradient flows under a linearly approximated bus injection model.

\noindent \textbf{Regulation with time-varying load and PV generation:}~We adopt the demand and solar profile in Fig.~\ref{fig1000}, and solve~\eqref{qp} considering the exact values of the $\mathfrak{B}$ matrix (thus, $\delta\,=\,0$ in Assumption~\ref{bound2}). Here we define the CLF as $h(e)=\sum_{i=1}^n\,\frac{1}{2}e_i^2\,-\,0.0625\,\left(v^i_{\max}-v^i_{\min}\right)^2$, and choose $P_u\,=\,\mathbf{I}_n$, $c=\,10$. Following Theorem~\ref{fx100},~\ref{fx400}, we set $\alpha_1=\,\alpha_2=\,10$, $\mu_1=\,1.9$ and $\mu_2=\,0.1$. Moreover, we randomly select $\mathcal{Q}\,=\,\left[-q~\text{MVar},\,q~\text{MVar}\right]$, where $q\in\,\left\lbrace 0.05,\cdots, 0.1\right\rbrace$. 

Following this, in Fig.~\ref{fig2}, we compute $q(t)$ and $\mathbf{v}(t)$ trajectories for bus $16$, which shows that while all methods respect reactive power limits, only our approach stabilizes network voltages. This is because baseline algorithms ignore network losses, whereas our method accounts for them approximately to design the regulation algorithm. We further observe spikes in the reactive power trajectories, which are likely driven by power flow non-linearities, input constraints, and external disturbances. Additionally, Table~\ref{vrt} computes the \textit{voltage recovery times} (in seconds) for bus $16$, which substantiates that across all the design parameter values, the FxTS-CLF algorithm yields a significantly faster recovery time than the other methods. In addition, it is also observed that, in practice, the network  voltage converges to the safe limit much faster considering the theoretical recovery time obtained via Corollary~\ref{fx200}.

\begin{figure}[htb!]
		\centering
		\includegraphics[width=1.0\columnwidth, height=40 mm]{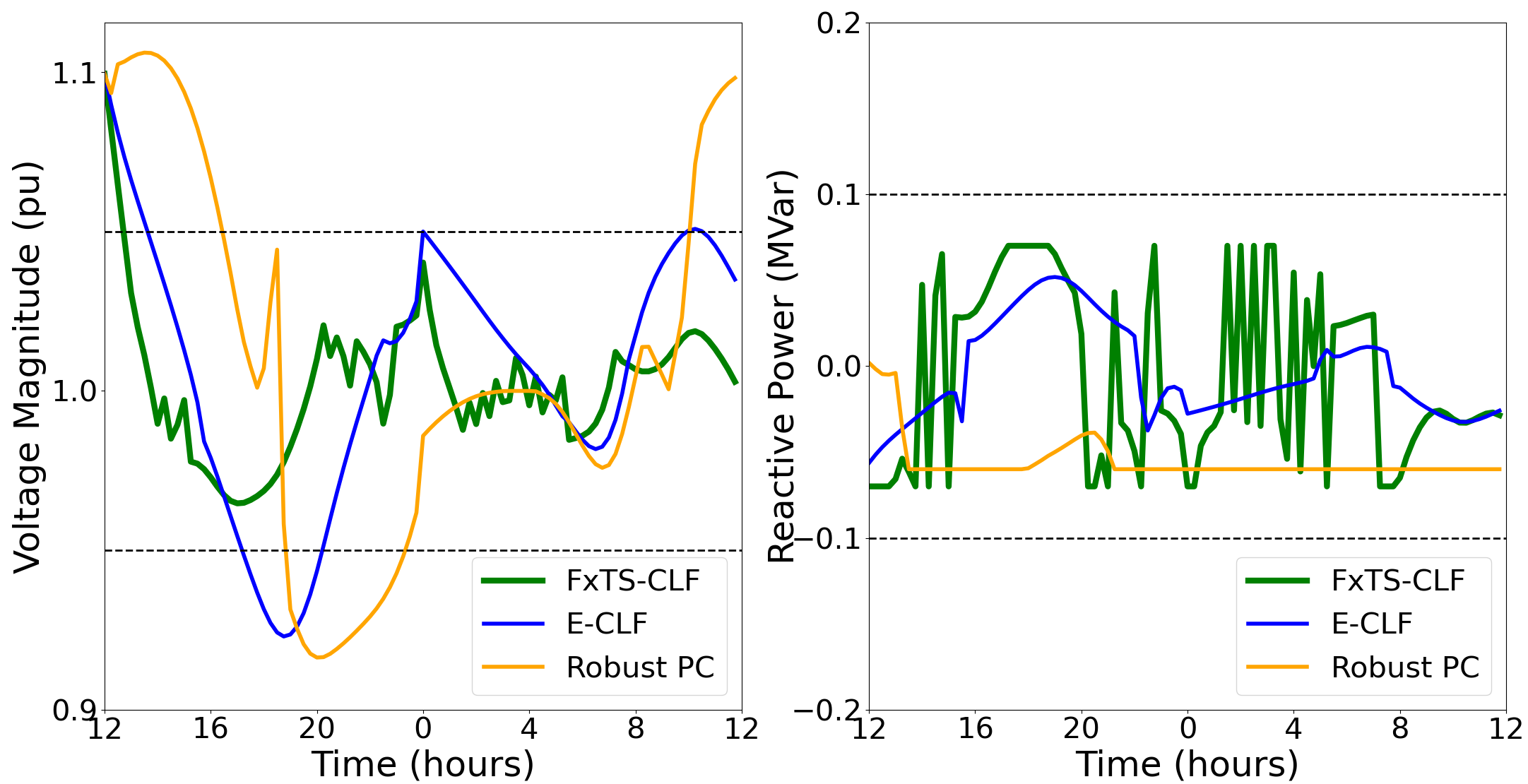}
	     \caption{\small{Comparison with baselines for dynamic load profile: Left and right plots display the voltage and reactive power trajectories of bus 16, respectively.}} \label{fig2}
		\end{figure}

\begin{center}
	\begin{table} [htb!]
		\caption{Voltage recovery time (in seconds) of bus $16$} \label{vrt}
		\centering
		\begin{tabular}{ | p{1.0 cm} | p{2.0 cm} | p{2.0 cm} | p{2.0 cm} | } 
			\hline
			Method & Control Coefficient & Practical recovery time (sec) & Theoretical Recovery time (sec)\\
			\hline
			\multirow{5}{4em}{FxTS-CLF} & $\left( \mu_1,\,\mu_2\right)\,=\,\left( 1.9,\,0.1\right)  $ & $7.27$ & $8.93$ \\ 
			& $\left( \mu_1,\,\mu_2\right)\,=\,\left( 1.7,\,0.1\right)  $ & $6.89$ & $8.25$ \\ 
			& $\left( \mu_1,\,\mu_2\right)\,=\,\left( 1.5,\,0.1\right)  $ & $6.39$ & $7.67$\\
			& $\left( \mu_1,\,\mu_2\right)\,=\,\left( 1.9,\,0.3\right)  $ & $6.7$ & $8.14$ \\
			& $\left( \mu_1,\,\mu_2\right)\,=\,\left( 1.9,\,0.5\right)  $ & $6.5$ & $7.92$\\  \hline
			E-CLF & $\mu_1\,=\,1$ & $17.84$ & $--$\\
			\hline
			Robust-PC & $--$ & $36.582$ & $--$\\
			\hline\hline
		\end{tabular}
		\label{tab1}
	\end{table} 
\end{center}	
		
\noindent \textbf{Robustness verification:}~Next, we evaluate the robustness of our method against network impedance uncertainty (by doubling the feeder resistance and reactance) and measurement errors (corrupting the real-time voltage measurements by a zero-mean random Gaussian noise with $10\%$ standard deviation, i.e., $1.266$ kv). To account for impedance discrepancies, we integrate the estimation technique from~\cite{yeh2024online}, which guarantees bounded estimation errors and maintain the same simulation data and baseline algorithms for a fair comparison. We found $\delta\,=\,8.27$ for~\cite{yeh2024online}, and set $\beta_1\,=\,\beta_2\,=\,\xi\,=\,1$. Then following Theorem~\eqref{fx400}, we select $\alpha_1\,=\,\alpha_2\,=\,20$. The illustrations are depicted in Fig.~\ref{fig4}, and it shows that our proposed algorithm successfully regulates network voltages within safe limits despite these uncertainties, whereas the baseline algorithms fail to do so.

\begin{figure}[htb!]
  \centering
	\includegraphics[width=1.0\columnwidth, height=40 mm]{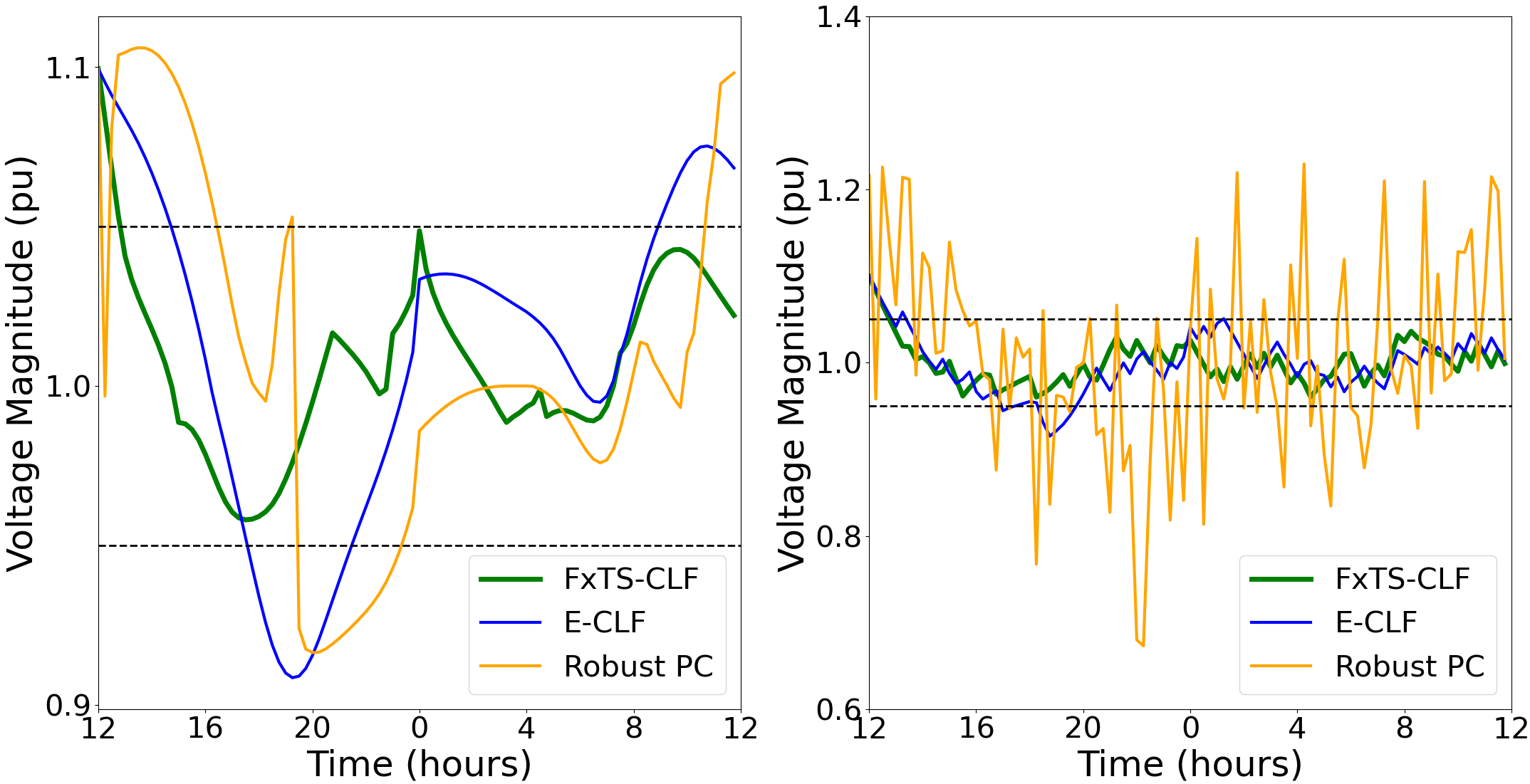}
	  \caption{\small{Robustness verification of $\mathrm{FxTS-CLF}$ algorithm: left and right plot, respectively elucidate robustness against impedance uncertainty and measurement noise for bus $16$.}} \label{fig4}
\end{figure}


\section{Concluding Remarks}
This article introduces an impedance-cognizant, fixed-time control algorithm for solving real-time voltage regulation problem of a radial, balanced distribution networks. We first leverage fixed-time stability (FxTS) and control Lyapunov function (CLF) techniques, to derive analytical bounds on control gains and design parameters to guarantee robust stability under external disturbances. Thereafter, translate the aforesaid conditions into an efficient QP framework to find control set points. We also demonstrate that when initial voltage violations exceed safe boundaries, controller saturation temporarily degrades the fixed-time structure into a finite-time reaching law, incurring a linear time penalty proportional to the violation magnitude before unconstrained FxTS convergence recovers. Empirical validation on an IEEE-33 bus network confirms the algorithm's capability to maintain voltage stability against measurement and modelling uncertainties.

\small
\bibliographystyle{IEEEtran}
\bibliography{publications,refs} 
\end{document}